\newcommand{\KL}{\textup{KL}}
\newcommand{\LF}{\textup{LF}}
\newcommand{\HF}{\textup{HF}}
\newcommand{\BF}{\textup{BF}}
\newcommand{\AL}{\mathcal{A}_L}
\newcommand{\AH}{\mathcal{A}_H}
\newcommand{\ExH}{P_f}
\newcommand{\ExL}{P_f^\LF}
\numberwithin{equation}{section}
\definecolor{lbcolor}{rgb}{0.9,0.9,0.9}
\theoremstyle{plain}
\newtheorem{theorem}{Theorem} \numberwithin{theorem}{section}
\newtheorem{lemma}[theorem]{Lemma}
\theoremstyle{definition}
\newtheorem{remark}[theorem]{Remark}
\definecolor{mydarkblue}{rgb}{0,0.08,0.45}
\DeclareMathOperator*{\argmin}{arg\,min}
\definecolor{mypink1}{rgb}{0.858, 0.188, 0.478}
\definecolor{mypink2}{RGB}{219, 48, 122}
\definecolor{mypink3}{cmyk}{0, 0.7808, 0.4429, 0.1412}
\definecolor{mygray}{gray}{0.6}
\newlength{\leftstackrelawd}
\newlength{\leftstackrelbwd}
\def\leftstackrel#1#2{\settowidth{\leftstackrelawd}%
{${{}^{#1}}$}\settowidth{\leftstackrelbwd}{$#2$}%
\addtolength{\leftstackrelawd}{-\leftstackrelbwd}%
\leavevmode\ifthenelse{\lengthtest{\leftstackrelawd>0pt}}%
{\kern-.5\leftstackrelawd}{}\mathrel{\mathop{#2}\limits^{#1}}}
\newcommand{\Var}{\textup{Var}}
\def\R{{\mathbb R}}
\def\N{{\mathbb N}}
\def\E{{\mathbb E}}
\def\R{{\mathbb R}}
\def\P{{\mathbb P}}
\journal{TBA}
\begin{document}

\begin{frontmatter}



\title{Langevin Bi-fidelity Importance Sampling for Failure Probability Estimation}

\author[1]{Nuojin Cheng}
\author[2]{Alireza Doostan}

\affiliation[1]{organization={Department of Applied Mathematics, University of Colorado Boulder}
}
\affiliation[2]{organization={Smead Aerospace Engineering Sciences Department, University of Colorado Boulder}
}

\begin{abstract}
    Estimating failure probability is a key task in the field of uncertainty quantification. In this domain, importance sampling has proven to be an effective estimation strategy; however, its efficiency heavily depends on the choice of the biasing distribution. An improperly selected biasing distribution can significantly increase estimation error. One approach to address this challenge is to leverage a less expensive, lower-fidelity surrogate. Building on the accessibility to such a model and its derivative on the random uncertain inputs, we introduce an importance sampling-based estimator, termed the Langevin bi-fidelity importance sampling (L-BF-IS), which uses score-function-based sampling algorithms to generate new samples and substantially reduces the mean square error (MSE) of failure probability estimation. The proposed method demonstrates lower estimation error, especially in high-dimensional input spaces and when limited high-fidelity evaluations are available. The L-BF-IS estimator's effectiveness is validated through experiments with two synthetic functions and two real-world applications governed by partial differential equations. These real-world applications involve a composite beam, which is represented using a simplified Euler-Bernoulli equation as a low-fidelity surrogate, and a steady-state stochastic heat equation, for which a pre-trained neural operator serves as the low-fidelity surrogate.
\end{abstract}



\begin{keyword}
failure probability estimation \sep multi-fidelity modeling \sep importance sampling \sep Langevin algorithm \sep high-dimensional statistics \sep variance reduction \sep surrogate modeling
\end{keyword}

\end{frontmatter}


\section{Introduction}
Uncertainty ubiquitously appears in many real-world applications, such as weather forecasting, financial modeling, healthcare decision-making, and engineering design. In computational modeling, uncertainty is often represented by a random vector, defined within a specific probability distribution based on prior knowledge or observation data. One of the key goals of uncertainty quantification (UQ) is to estimate the probability of a device or system failure based on model outputs, also known as the quantity of interest (QoI). There are many methods to solve this problem, including the first-order reliability method (FORM) \cite{hasofer1974exact, der2005first} and its extension to the second order \cite{fiessler1979quadratic}. Other works involve the Monte Carlo sampling method \cite{au2001estimation}. However, the standard Monte Carlo method faces the challenge of slow convergence relative to sample size, especially when the probability of the failure event is small. In practical scenarios, model evaluation demands substantial computational resources, limiting the Monte Carlo method's feasibility. Consequently, there is significant interest in enhancing the convergence of the Monte Carlo method by reducing the number of model evaluations, primarily achieved by reducing the variance of estimators. 

Variance reduction in Monte Carlo estimators can be primarily approached in two ways. The first method involves control variates \cite{asmussen2007stochastic,fairbanks2017low,gorodetsky2020generalized}, which uses correlated random variables to adjust the original estimator based on the covariance between the control and target variables. This adjustment yields a new estimator with reduced variance, provided the appropriately chosen control variable is well-correlated with the target variable. Despite its widespread application in UQ, the efficient control variate method is contingent on the availability of highly correlated control variables with known (or cheap to evaluate) mean and accurate covariance estimation, limiting its applicability. The second method is importance sampling (IS), which is the focus of this work. IS samples the input random variables following a different probability distribution (referred to as biasing distribution) to emphasize the regions that significantly impact the estimation. This approach effectively reduces the estimator variance and focuses on crucial input space areas, proving particularly useful in scenarios involving rare events or tail probability estimations. The critical challenge in IS is constructing a suitable biasing distribution, a task complicated by limited access to model outputs under the UQ setting. 

Several studies have examined the construction of biasing distributions specifically tailored for failure probability estimation. Adaptive importance sampling techniques, such as those in \cite{bucher1988adaptive, kurtz2013cross, geyer2019cross, peherstorfer2018multifidelity}, tune the biasing density within a parameterized family by adaptively finding the optimal density under the cross entropy criteria. Papaioannou et al. \cite{papaioannou2016sequential} discuss the application of sequential importance sampling (SIS) for estimating the probability of failure in structural reliability analysis. Initially developed for exploring posterior distributions and estimating normalizing constants in Bayesian inference, SIS involves a sequential reweighting operation that progressively shifts samples from the prior to the posterior distribution. This work was later adapted using the ensemble Kalman filter \cite{wagner2022ensemble} and consensus sampling \cite{althaus2024consensus}. However, these methods may require extensive forward model computations, limiting their practical applicability. 

To mitigate the computational burdens associated with high-fidelity (HF) models, adopting a ``low-fidelity'' (LF) model proves advantageous. This model, for instance,  derived from the same solver but employing a coarser grid or an approximate surrogate function—either based on fixed basis functions or data-driven—offers reduced accuracy in exchange for significantly lower computational cost. This approach, often named bi-fidelity or multi-fidelity, has been integrated into many of the aforementioned methods. For instance, Li et al. \cite{li2010evaluation} utilized surrogates of the limit state function as low-fidelity models to enhance adaptive importance sampling \cite{li2011efficient}. Similarly, Wagner et al. \cite{wagner2020multilevel} extended sequential importance sampling to multi-level cases where coarse grid solutions serve as low-fidelity models.
Peherstorfer et al. \cite{peherstorfer2016multifidelity} proposed the multi-fidelity importance sampling method (MF-IS), which constructs the biasing distribution by applying a Gaussian mixture model to inputs whose LF evaluations indicate potential failures, suggesting that inputs failing under LF conditions are likely to fail under HF conditions as well. This strategy preserves the unbiased nature of the importance sampling estimator and does not confine the format of the LF model. Subsequent extensions of this framework \cite{peherstorfer2017combining, kramer2019multifidelity, alsup2023context} have integrated a collection of different estimators and explored the balance between computation and accuracy. However, the aforementioned multi-fidelity methods using polynomial chaos surrogates or based on Gaussian mixture models are known for their rapidly growing complexity with the increase in the dimension of the inputs, denoted as $D$. Moreover, identifying the number of failure clusters for the Gaussian mixture model poses challenges without prior knowledge. In response to the identified challenges, a recent work by Cui et al. \cite{cui2024deep} introduced a deep importance sampling method. This method is notable for its biasing distribution construction with linear complexity $\mathcal{O}(D)$. This was achieved through the push-forward of a reference distribution under a series of order-preserving transformations, each shaped by a squared tensor-train decomposition. While this method offers theoretical and numerical advancements over \cite{peherstorfer2016multifidelity}, challenges related to the training of neural networks and its associated optimization error persist.

In practical applications, low-fidelity models often possess additional properties and information that can be leveraged. For instance, when a low-fidelity model is a simplified model, such as the Euler-Bernoulli equation for beam deflections \cite{cheng2024subsampling, cheng2024bi}, its explicit formulation facilitates simple forward evaluation at minimal cost and provides derivative information. Similarly, when the low-fidelity model is a data-driven surrogate model, the recent development of auto-differentiation-enabled libraries \cite{abadi2016tensorflow, paszke2019pytorch} produces derivatives of the forward surrogate map. These examples highlight the potential of utilizing additional knowledge from low-fidelity models to construct more effective biasing distributions for importance sampling estimators.

In this work, we introduce a new importance sampling estimator, named Langevin Bi-fidelity Importance Sampling (L-BF-IS). By leveraging a new parameterization of the biasing density function and the Metropolis-adjusted Langevin algorithm (MALA) \cite{rossky1978brownian, roberts2002langevin}, this estimator scales favorably in high-dimensioned scenarios ($D\geq 100$). Specifically, the required number of iterations for this algorithm depends on $\mathcal{O}(D^{1/3})$ \cite{freund2022convergence}. The contributions of this work are threefold:
\begin{enumerate}
\item We introduce a new parameterization of the biasing density function leveraging a low-fidelity model; see Equation~\eqref{eq:q-xi}. Two approaches are proposed to tune the only hyper-parameter $\ell$;
\item We analyze the L-BF-IS estimator's statistical properties and estimation performance based on the relation between low-fidelity and high-fidelity models;
\item We empirically demonstrate the effectiveness of the MALA on a multimodal biasing density function and the L-BF-IS performance through synthetic and real-world problems governed by differential equations with high-dimensional random inputs.
\end{enumerate}
The structure of this work is as follows. Section~\ref{sec:method} details the construction and implementation of L-BF-IS, presents a discussion on its error analysis. Section~\ref{sec:experiments} demonstrates the performance of L-BF-IS using three numerical examples. \footnote{The codes are available at \href{https://github.com/CU-UQ/L-BF-IS}{https://github.com/CU-UQ/L-BF-IS}.} Finally, Section~\ref{sec:conclusion} concludes the paper and discusses avenues for future research.

\section{Langevin Bi-fidelity Importance Sampling Estimator and its Properties}\label{sec:method}
In this section, a detailed motivation, construction, and theoretical analysis of the proposed L-BF-IS estimator is presented. Section~\ref{ssec:background} introduces the concepts of Monte Carlo method and importance sampling. Section~\ref{ssec:estimator} presents the groundwork of L-BF-IS: the designed biasing distribution $q(\bm z)$ and the formulation of L-BF-IS estimator. In Section~\ref{ssec:stat-prop}, the statistical properties of the proposed L-BF-IS estimator, including its unbiasedness, variance, and consistency are discussed. Section~\ref{ssec:lengthscale} includes two approaches to estimate the most important parameter in our estimator, $\ell$. Section~\ref{ssec:sampling} presents the MALA-based technique employed to sample the biasing distribution. A discussion on the influence of the relation between low-fidelity and high-fidelity models on the performance of L-BF-IS estimation is presented in Section~\ref{ssec:bifidelity}. Section~\ref{ssec:error} provides insights on the potential sources of errors in L-BF-IS. 

\subsection{Background}
\label{ssec:background}
We consider an input-output system encompassing an input random vector of dimension $D\in\N$ and an output random variable, named the quantity of interest (QoI), of dimension $d\in\N$. A probability space $(\Omega,\mathcal{F},\mathbb{P})$ is embedded in the input space so that $\Omega\subset \R^D$. The system is represented as a $\mathcal{F}$-measurable function that is equipped with two distinct levels of fidelity: a high-fidelity (HF) QoI function $f^\HF:\Omega\to\R^d$ and a low-fidelity (LF) QoI function $f^\LF:\Omega\to\R^d$, with $\Omega\subset\R^D$. The inputs are random variables $\bm z$ that are assumed to obey an absolutely continuous (with respect to Lebesgue measure) probability distribution, yielding a density function $p(\bm z)$ with associated law $\mathbb{P}_p$. Additionally, for the failure probability, we define Borel-measurable performance functions $g^\HF:\R^d\to\R$ and $g^\LF:\R^d\to\R$. These two functions evaluate the failure result given a QoI and provide a value reflecting the outputs. For simplicity, we define $h^\HF\coloneqq g^\HF\circ f^\HF$ and $h^\LF\coloneqq g^\LF\circ f^\LF$. If $h^\HF, h^\LF(\bm z)<0$, the result represents failures. In the following contexts, we call $h^\HF$ and $h^\LF$ as HF and LF functions, respectively. In the literature \cite{li2010evaluation, li2011efficient}, limit state function that describes $\{\bm z\mid h^\HF(\bm z) = 0\}$ is also discussed. The existence of such a limit state function is based on certain continuity of the function $h^\HF$, which is not assumed in this work. We also define failure regions $\AL\coloneqq (h^\LF)^{-1}((-\infty, 0))$ and $\AH\coloneqq (h^\HF)^{-1}((-\infty, 0))$. Both $\AH$ and $\AL$ belong to $\mathcal{F}$ due to the measurable-function assumption and can be multi-modal. We let $\mathbb{E}_p$ and $\Var_p$ denote the expectation and variance associated with the density $p(\bm z)$, respectively.

Under the multi-fidelity scheme, we aim to evaluate the expected HF failure probability, 
\begin{equation}
    \ExH \coloneqq \P_p[h^\HF(\bm z)<0] = \E_p[\mathbbm{1}_{h^\HF(\bm z)<0}] = \int_\Omega \mathbbm{1}_{h^\HF(\bm z)<0}p(\bm z)d\bm z=\int_{\AH}p(\bm z)d\bm z = \mathbb{P}_p[\AH],
\end{equation}
where $\mathbbm{1}$ is the indicator function. The Monte Carlo estimator, with $N$ samples, is 
\begin{equation}
\label{eq:mc-def}
\widehat{P}^\textup{MC}_N \coloneqq \frac{1}{N}\sum_{i=1}^N \mathbbm{1}_{h^\HF(\bm z_i)<0},\quad \{\bm z_i\}_{i=1}^N\overset{\mathrm{iid}}{\sim}p(\bm z).
\end{equation}
In this work, we use the hat notation to denote estimators. The mean square error (MSE) of Monte Carlo estimation $\E_p[(\widehat{P}^\textup{MC}_N - \ExH)^2]$ is $\Var_p[\mathbbm{1}_{h^\HF(\bm z)<0}]/N$. In applications like failure probability estimation, when the failure probability is small, the aforementioned variance becomes large, which requires more HF evaluations to reduce the MSE. Importance sampling (IS) \cite{penfold1967monte} is one of the methods that effectively reduces the estimator variance by re-weighting the samples with a carefully chosen alternative density function $q(\bm z)$, named biasing density function. By building $q(\bm z)$ to replace $p(\bm z)$, the IS estimator is then defined as
\begin{align}
\label{eq:is-def}
    \widehat{P}^\textup{IS}_N \coloneqq \frac{1}{N}\sum_{i=1}^N \mathbbm{1}_{h^\HF(\tilde{\bm z}_i)<0}\frac{p(\tilde{\bm z}_i)}{q(\tilde{\bm z}_i)},\quad \{\tilde{\bm z}_i\}_{i=1}^N\overset{\mathrm{iid}}{\sim} q(\bm z).
\end{align}
Note that the IS estimator in Equation~\eqref{eq:is-def} is unbiased, i.e., $\mathbb{E}_q[\widehat{P}^\textup{IS}_N]=\ExH$.

\subsection{Biasing Distribution and L-BF-IS Estimator}
\label{ssec:estimator}
It is known that the optimal biasing density for failure probability estimation is (see \cite{penfold1967monte})
\begin{equation}
\label{eq:opt-q}
q^\ast(\bm z) \coloneqq \frac{1}{\ExH}\mathbbm{1}_{h^\HF(\bm z) < 0}p(\bm z).
\end{equation}
However, we cannot simply use the LF indicator function $\mathbbm{1}_{h^\LF(\cdot)<0}$ to replace its counterpart $\mathbbm{1}_{h^\HF(\cdot)<0}$ due to singularity issue on the IS weight $p(\bm z)/q(\bm z)$.  Instead, we aim to design a ``soft version'' for the conceptually optimal biasing density while providing it with flexibility to adjust for unmatching support between $\mathbbm{1}_{h^\HF(\cdot)<0}$ and $\mathbbm{1}_{h^\LF(\cdot)<0}$.

Similar to the smoothing strategy in \cite{papaioannou2019improved, uribe2021cross}, we propose the biasing distribution
\begin{align}
\label{eq:q-xi}
    q(\bm z) \coloneqq \frac{1}{\mathcal{Z}(\ell)}\exp\left(-\ell\tanh\circ h^\LF(\bm z) \right)p(\bm z),
\end{align}
where $\ell$ is a length scale and $\mathcal{Z}(\ell)$, a function of $\ell$, is the normalisation constant. The value of $\mathcal{Z}(\ell)$ is given by
\begin{equation}
\label{eq:z-def}
\mathcal{Z}(\ell) = \int_{\Omega} \exp\left(-\ell\tanh\circ h^\LF(\bm z) \right)p(\bm z) d\bm z = \mathbb{E}_p\left[\exp\left(-\ell\tanh\circ h^\LF(\bm z) \right)\right].
\end{equation}
Note that $q(\bm z)$ is strictly positive when the input is in the support of $p(\bm z)$, which guarantees that $p(\bm z)$ is absolutely continuous with respect to $q(\bm z)$ and the weight $p(\bm z)/q(\bm z)$ is well-defined. Based on the initial density $p(\bm z)$, the formulation of $q(\bm{z})$ in Equation~\eqref{eq:q-xi} prioritizes higher probability weights for samples $\bm{z}$ whose LF counterparts indicate a failure outcome. This approach is based on an assumed connection between the HF function $h^\HF$ and its LF counterpart $h^\LF$, which will be discussed in more details in Section~\ref{ssec:bifidelity}. Figure~\ref{fig:q-example} illustrates this concept with a two-dimensional ($D=2$) example, demonstrating the application of our proposed method. Similar to the strategy employed by \cite{li2011efficient}, the $\tanh$ function facilitates a ``buffer'' region within the importance sampling framework. However, unlike the method above, our approach does not aim to directly approximate limit state functions due to its complexities in high-dimensional space. 

\begin{figure}[ht]
	\centering
	\includegraphics[width = 1.0\textwidth]{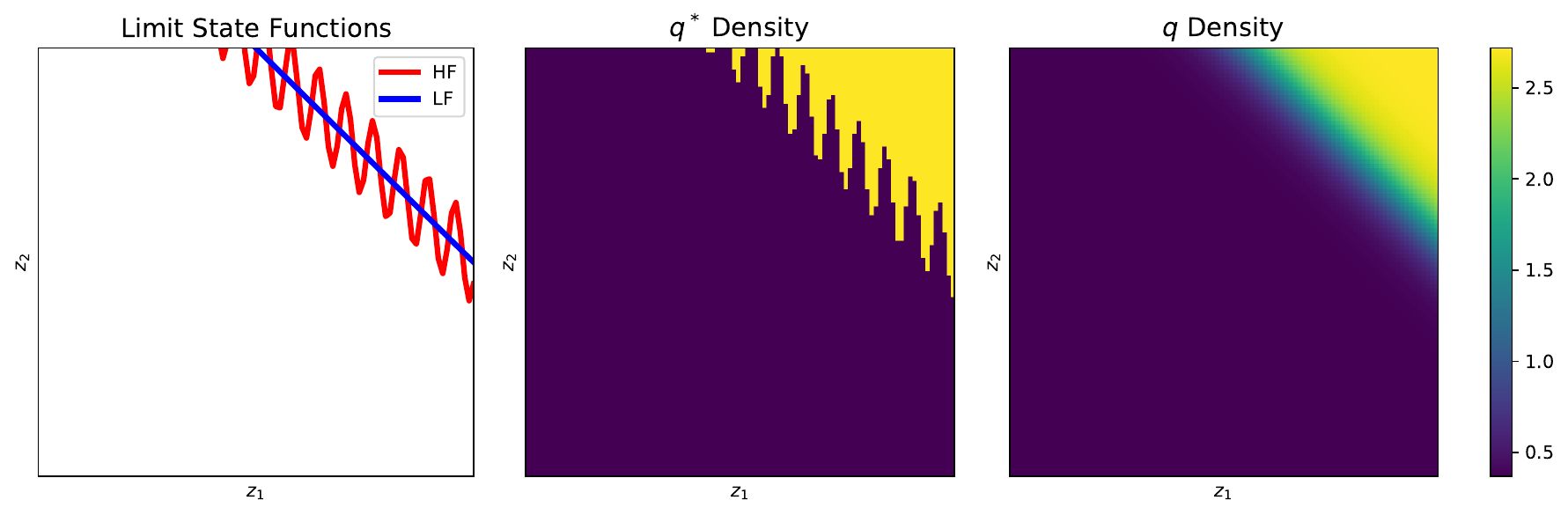}
	\caption{Illustration of the concept of limit state functions and biasing densities in the inputs $\bm{z}$. The left figure displays the limit state functions that separate the failure region from the safe region, highlighting the HF limit function in red and the LF surrogate in blue. The middle figure shows the optimal biasing density as derived from Equation~\eqref{eq:opt-q}. The right figure displays the proposed biasing density, as defined in Equation~\eqref{eq:q-xi}, which utilizing the LF function.}
	\label{fig:q-example}
\end{figure}

Given $q(\bm z)$ in Equation~\eqref{eq:q-xi} and $p(\bm z)$, the importance sampling weight function is
\begin{align}
\label{eq:weight-def}
    \frac{p(\bm z)}{q(\bm z)} = \mathcal{Z}(\ell)\exp\left(\ell\tanh\circ h^\LF(\bm z) \right).
\end{align}
We approximate $\mathcal{Z}(\ell)$ using Monte Carlo estimation 
\begin{equation}
\label{eq:z-est}
\widehat{\mathcal{Z}}_M(\ell) = \frac{1}{M}\sum_{m=1}^M \exp\left(-\ell\tanh\circ h^\LF(\bm z_m) \right),
\end{equation}
where $\{\bm z_m\}_{m=1}^M\overset{\mathrm{iid}}{\sim}p(\bm z)$. Since the estimation of $\mathcal{Z}(\ell)$ only involves evaluating the inexpensive LF function, $M$ can be sufficiently large so that $\mathcal{Z}(\ell)$ can be estimated with high accuracy. 

According to the definition of importance sampling estimator in Equation~\eqref{eq:is-def}, and given our $q(\bm z)$, we define the L-BF-IS estimator as follows
\begin{align}
\label{eq:bf-is-estimator}
    \widehat{P}^\BF_{M,N} = \left(\frac{1}{M}\sum_{m=1}^M \exp\left(-\ell\tanh\circ h^\LF(\bm z_m) \right)\right)\left(\frac{1}{N}\sum_{i=1}^N \mathbbm{1}_{h^\HF(\tilde{\bm z}_i)<0}\exp\left(\ell\tanh\circ h^\LF(\tilde{\bm z}_i) \right)\right),\hspace{1.0em}
\end{align}
where $\{\bm z_m\}_{m=1}^M\overset{\mathrm{iid}}{\sim}p(\bm z)$ and $\{\tilde{\bm z}_i\}_{i=1}^N\overset{\mathrm{iid}}{\sim} q(\bm z)$.

\subsection{Statistical Properties of L-BF-IS Estimator}
\label{ssec:stat-prop}
Analyzing bias, variance, and consistency, is crucial for evaluating the performance of an estimator. Firstly, due to the independence between samples from $p(\bm z)$ and $q(\bm z)$, the L-BF-IS estimator in Equation~\eqref{eq:bf-is-estimator} is unbiased, i.e.,
\begin{equation}
\label{eq:bf-is-mean}
\begin{aligned}
\mathbb{E}_{p\otimes q}\left[\widehat{P}^\BF_{M,N}\right] &= \mathbb{E}_p\left[\exp\left(-\ell\tanh\circ h^\LF(\bm z) \right)\right]\mathbb{E}_q\left[\mathbbm{1}_{h^\HF(\bm z)<0}\exp\left(\ell\tanh\circ h^\LF(\bm z) \right)\right]\\
&= \mathcal{Z}(\ell)\mathbb{E}_q\left[\mathbbm{1}_{h^\HF(\bm z)<0}\exp\left(\ell\tanh\circ h^\LF(\bm z) \right)\right] = \ExH.
\end{aligned}
\end{equation}
Here, $p\otimes q$ represents the Cartesian product of the two densities, indicating their independence and the last equality is from the unbiasedness of the important sampling estimator. Secondly, following the relation
\begin{equation}
\label{eq:ind-var}
\Var[XY] = \Var[X]\Var[Y] + \E^2[X]\Var[Y] + \Var[X]\E^2[Y],
\end{equation}
for two independent variables $X$ and $Y$ the variance of L-BF-IS estimator is given by 
\begin{equation}
\label{eq:bf-is-var}
\begin{aligned}
\Var_{p\otimes q}[\widehat{P}^\BF_{M,N}] &= \frac{1}{MN}\Var_p\left[\exp\left(-\ell\tanh\circ h^\LF(\bm z) \right)\right]\Var_q\left[\mathbbm{1}_{h^\HF(\bm z)<0}\exp\left(\ell\tanh\circ h^\LF(\bm z) \right)\right]\\ 
&\quad+ \frac{\mathcal{Z}^2(\ell)}{N}\Var_q\left[\mathbbm{1}_{h^\HF(\bm z)<0}\exp\left(\ell\tanh\circ h^\LF(\bm z) \right)\right]\\ 
&\quad+ \frac{1}{M}\Var_p\left[\exp\left(-\ell\tanh\circ h^\LF(\bm z) \right)\right](\ExH)^2.
\end{aligned}
\end{equation}
With the results from Equation~\eqref{eq:bf-is-mean} and Equation~\eqref{eq:bf-is-var}, the consistency of the L-BF-IS can be shown by applying the Chebyshev's inequality, 
\begin{equation}
\mathbb{P}_{p\otimes q}\left(\lvert\widehat{P}^\BF_{M, N} - \ExH\rvert 
\geq \epsilon\right) \leq \frac{\Var_{p\otimes q}\left[\widehat{P}^\BF_{M, N}\right]}{\epsilon^2},\quad \forall \epsilon > 0.
\end{equation}
Notice that the variance $\Var_{p\otimes q}\left[\widehat{P}^\BF_{M, N}\right]$ decays when both $M$ and $N$ increase.
Additionally, if we assume the value of $M$ is sufficiently large so that $1/M$ is small enough to be ignored, the variance in Equation~\eqref{eq:bf-is-var} can be approximated as 
\begin{equation}
\label{eq:bf-is-var-apx}
\Var_{p\otimes q}[\widehat{P}^\BF_{M,N}] \approx \frac{\mathcal{Z}^2(\ell)}{N}\Var_q\left[\mathbbm{1}_{h^\HF(\bm z)<0}\exp\left(\ell\tanh\circ h^\LF(\bm z) \right)\right].
\end{equation}

\subsection{Selection of Lengthscale $\ell$}
\label{ssec:lengthscale}
The value of the parameter $\ell$ in Equation~\eqref{eq:q-xi} plays a key role in determining the performance of the L-BF-IS estimator. Given that the estimator is unbiased as shown in Equation~\eqref{eq:bf-is-mean} and the values of $M$ and $N$ are held fixed, the goal is to find an optimal value of $\ell$ so that the variance of the L-BF-IS estimator is minimized. Leveraging the variance approximation presented in Equation~\eqref{eq:bf-is-var-apx} and acknowledging the dependency of $q(\bm z)$ on $\ell$, we re-formulate the approximated variance as
\begin{equation}
\label{eq:bf-is-var-apx-p}
\begin{aligned}
\Var_{p\otimes q}\left[\widehat{P}^\BF_{M,N}\right]
&\approx \frac{\mathcal{Z}(\ell)}{N} \E_p\left[\mathbbm{1}_{h^\HF(\bm z)<0}\exp\left(\ell\tanh\circ h^\LF(\bm z) \right)\right] - \frac{(\ExH)^2}{N}.
\end{aligned}
\end{equation}
For the interest of brevity, more detils on the derivation of Equation~\eqref{eq:bf-is-var-apx-p} are presented in \ref{apdx:variance}. Focusing solely on the relationship between the variance in Equation~\eqref{eq:bf-is-var-apx-p} and $\ell$, 
\begin{equation}
\label{eq:bf-is-var-apx-p-2}
\Var_{p\otimes q}\left[\widehat{P}^\BF_{M,N}\right] \approx \underbrace{\frac{\mathcal{Z}(\ell)}{N}}_{\ell\downarrow} \underbrace{\E_p\left[\mathbbm{1}_{h^\HF(\bm z)<0}\exp\left(\ell\tanh\circ h^\LF(\bm z) \right)\right]}_{\ell\uparrow} + \mathcal{O}(1).
\end{equation}
Upon examining Equation~\eqref{eq:bf-is-var-apx-p-2} closely, it is clear that the value of $\mathcal{Z}(\ell)$, as defined in Equation~\eqref{eq:z-def}, decreases monotonically with $\ell$ while the expectation component exhibits a monotonic increase with the value of $\ell$. This dichotomy highlights a trade-off between larger and smaller $\ell$ values, underscoring the importance of designing an algorithm to optimally determine $\ell$.

Since estimating the expectation term requires evaluating the HF function $h^\HF$, two practical approaches are next introduced to choose an optimal value for $\ell$.
\subsubsection{Approach One: Using Pilot HF Evaluations}
\label{sssec:ell-one}
In the first approach, one consider a small sample approximation of the variance in \eqref{eq:bf-is-var-apx-p-2}
\begin{equation}
\label{eq:var-est1}
\begin{aligned}
\widehat{V}_L(\ell) &= \frac{\widehat{\mathcal{Z}}_M(\ell)}{NL} \sum_{j=1}^L\mathbbm{1}_{h^\HF(\bm z_j)<0}\exp\left(\ell\tanh\circ h^\LF(\bm z_j) \right), \quad \{\bm z_j\}_{j=1}^L\sim p(\bm z)
\end{aligned}
\end{equation}
using $L\ll M$ HF function $h^\HF(\cdot)$ evaluations. We then choose the optimal $\ell^*$ such that 
\begin{equation}
\ell^* = \argmin_{\ell} \widehat{V}_L(\ell),
\end{equation}
which, as a 1D optimization problem, can be solved using a simple grid search or a first/second order method. However, when the failure probability is small, e.g. $\ExH \leq \mathcal{O}(1/L)$, a risk of this approach is that $\mathbbm{1}_{h^\HF(\bm z_j)<0}$ can be $0$ for all $\bm z_j$, thus making it invalid. Indeed, since the HF function is evaluated only $L$ times, the probability that no failure case is sampled is $(1 - \ExH)^L$ and can be non-trivial.

\subsubsection{Approach Two: Only Using LF Evaluations}
\label{sssec:ell-two}
An alternative approach is to replace $\mathbbm{1}_{h^\HF(\bm z_j)<0}$ with $\mathbbm{1}_{h^\LF(\bm z_j)<0}$, which produces the variance estimator
\begin{equation}
\label{eq:var-est2}
\begin{aligned}
\widehat{V}'_M(\ell) &= \frac{\widehat{\mathcal{Z}}_M(\ell)}{NM} \sum_{m=1}^M\mathbbm{1}_{h^\LF(\bm z_m)<0}\exp\left(\ell\tanh\circ h^\LF(\bm z_j) \right),
\end{aligned}
\end{equation}
with samples $\{\bm z_m\}_{m=1}^M\overset{\mathrm{iid}}{\sim}p(\bm z)$. We choose the optimal $\ell^*$ as 
\begin{equation}
\ell^* = \argmin_\ell \widehat{V}'_M(\ell).
\end{equation}
This approach provides a less accurate estimation for the variance in exchange for avoiding directly evaluating the HF function. We suggest applying this approach when the value of $(1-\ExH)^L$ is large, where $\ExH$ can be replaced by some prior knowledge of the failure probability and, $L\ll M$, is an affordable number of HF function evaluations.

%
%
\subsection{Sampling the Biasing Distributions}
\label{ssec:sampling}
The formulation of the biasing density $q(\bm{z})$ in Equation~\eqref{eq:q-xi}, as well as the availability of the LF function derivative $\nabla_{\bm{z}} h^\LF(\bm{z})$ facilitates the evaluation of the score function $\nabla_{\bm{z}} \log q(\bm{z})$. This capability significantly enhances the selection of sampling methods that utilize the score function, including but not limited to Langevin Monte Carlo, Hamiltonian Monte Carlo, and Stein Variational Gradient Descent \cite{liu2016stein}.

Among the various options, we opt for the Metropolis-adjusted Langevin algorithm (MALA), a variant of Langevin Monte Carlo, to generate samples from the biasing distribution. This choice is made because of its simplicity and widely-used implementation. However, it is important to note that any score-based sampling method is compatible with the importance sampling framework proposed in this work. MALA effectively integrates the discretization of Langevin dynamics with the Metropolis-Hastings algorithm \cite{roberts2002langevin}, offering a robust framework for sampling. 

Assuming the score function $\nabla_{\bm z}\log p(\bm z)$ exists and is bounded, and the LF function $h^\LF$ is differentiable and Lipschitz, the biasing density $q(\bm z)$ can be written as 
\begin{align}
\label{eq:U}
    q(\bm z) = \frac{1}{\mathcal{Z}(\ell)}\exp(-U(\bm z)),
\end{align}
where the potential function $U(\bm z)$ is given by 
\begin{align}
\label{eq:U-def}
    U(\bm z) \coloneqq \ell\tanh\circ h^\LF(\bm z)-\log p(\bm z).
\end{align}
According to \cite{roberts2002langevin}, the density $q(\bm z)$ is the unique invariant distribution of the Langevin stochastic differential equation (SDE)
\begin{align}
\label{eq:langevin-sde}
    d\bm z = -\nabla U(\bm z) + \sqrt{2}d\bm{W}_t,
\end{align}
where $\bm{W}_t$ is the Brownian motion. 
Therefore, by simulating the SDE in Equation~\eqref{eq:langevin-sde} via Euler-Maruyama method,
\begin{align}
\label{eq:sde-disc}
    \bm z^{(t+\tau)} = \bm z^{(t)} - \tau\nabla U(\bm z^{(t)}) + \sqrt{2}(\bm{W}_{t+\tau}-\bm{W}_t),
\end{align}
where $\bm z^{(t)}$ represents the discretized $\bm z$ and $\tau$ is the step size. 
The property of Brownian motion, $\bm{W}_{t+\tau}-\bm{W}_t\sim\mathcal{N}(\bm{0},\tau \bm{I}_D)$ with the identity matrix $\bm{I}_D\in\R^{D\times D}$, allows to re-write Equation~\eqref{eq:sde-disc} as 
\begin{align}
\label{eq:langevin-alg}
    \bm z^{(t+1)} &= \bm z^{(t)} - \tau\nabla U(\bm z^{(t)}) + \sqrt{2\tau}\bm\epsilon,\quad \bm\epsilon\sim\mathcal{N}(\bm{0},\bm{I}_D).
\end{align}
Following Equations~\eqref{eq:U} and \eqref{eq:U-def}, $\nabla_{\bm z} U(\bm z)$ is given by
\begin{equation}
\begin{aligned}
\label{eq:score}
    \nabla_{\bm z} U(\bm z) 
    &= -\nabla_{\bm z} \log q(\bm z) 
    = \ell\nabla_{\bm z}\tanh\circ h^\LF(\bm z) - \nabla_{\bm z} \log p(\bm z).
\end{aligned}
\end{equation}
Besides sampling $\bm z^{(t)}$ iteratively, MALA implements a Metropolis-Hastings accept-reject mechanism to reject proposals in low-density regions \cite{roberts1996exponential}. The rejection of the new proposed sample $\bm z^{(t+1)}$ is triggered if 
\begin{align}
\label{eq:metropolis}
u \geq \exp\left(U(\bm z^{(t)}) + \pi(\bm z^{(t)}, \bm z^{(t+1)}) - U(\bm z^{(t+1)}) - \pi(\bm z^{(t+1)}, \bm z^{(t)})\right),
\end{align}
where $u$ is a random variable sampled from uniform distribution $U[0,1]$ and $\pi$ is a function defined as 
\begin{align}
\pi(\bm z_1, \bm z_2) \coloneqq -\frac{1}{4\tau}\lVert \bm z_1 - \bm z_2 - \tau\nabla U(\bm z_2)\rVert_2^2.
\end{align}
Numerically, we discard the first $B$ samples of the Markov chain, referred to as burn-in samples, with $B$ varying depending on the problem scale. The sampling algorithm is detailed in Algorithm~\ref{alg:langevin-sampling}. Notice that, once $\ell$ is set, Algorithm~\ref{alg:langevin-sampling} requires only $\mathcal{O}(T+B)$ LF evaluations. The construction of the L-BF-IS estimator is concluded in Algorithm~\ref{alg:l-bf-is}.

\begin{algorithm}[ht]
		\DontPrintSemicolon
		\caption{Langevin Algorithm for Sampling from Biasing Distribution\label{alg:langevin-sampling} \\ {\color{blue}$\mathcal{O}(T+B)$ LF evaluations}}
		\KwIn{Length scale $\ell$, burn-in number $B$, LF function $h^\LF$, its gradient $\nabla h^\LF$, step size $\tau$, iteration number $T$, and initial state $\bm z^{(0)}$ (Optional)}
		\KwOut{A collection of samples $\{\tilde{\bm z}_i\}_{i=1}^{N}\overset{}{\sim}q(\bm z)$}
		\begin{algorithmic}[1]
			\STATE Sample initial state $\bm z^{(0)}\overset{\mathrm{iid}}{\sim}p(\bm x)$ if $\bm z^{(0)}$ is not given
                \FOR{$t = 1:T+B$} 
			\STATE update $\bm z^{(t)}$ following Equation~\eqref{eq:langevin-alg} and Equation~\eqref{eq:score}
                \STATE reject the step if Equation~\eqref{eq:metropolis} satisfied.
                \ENDFOR
            \STATE $\{\tilde{\bm z}_t\}_{t=1}^{T}\gets\{\bm z^{(t)}\}_{t=B+1}^{T+B}$
		\end{algorithmic}
\end{algorithm}

\begin{remark}
    When implementing Algorithm~\ref{alg:langevin-sampling} on a bounded domain $\Omega$, we introduce a penalty value $ q(\bm{z}) \gg 0 $ for all $ \bm{z} \notin \Omega $ to discourage the chain from moving outside the domain.
\end{remark}

\begin{algorithm}[ht]
		\DontPrintSemicolon
		\caption{L-BF-IS Method\label{alg:l-bf-is}
  \\ {\color{blue}$\mathcal{O}(M+T+B)$ LF evaluations}\quad{\color{red}$\mathcal{O}(N+L)$ HF evaluations}}
		\KwIn{LF sample size $M$, HF sample size $N$, LF function $h^\LF$, HF function $h^\HF$, and additional HF sample size $L$ (optional)}
		\KwOut{A value of L-BF-IS estimator $\widehat{P}^\BF_{M,N}$}
		\begin{algorithmic}[1]
			\STATE Determine Langevin dynamics step size $\tau$, burn-in number $B$, and iteration number $T$ based on available computational resource $(T>N)$
                \IF{$L$ is provided}
                \STATE Determine $\ell$ that minimizes the variance estimator in Equation~\eqref{eq:var-est1};
                \ELSE
                \STATE Determine $\ell$ that minimizes the variance estimator in Equation~\eqref{eq:var-est2};
                \ENDIF
                \STATE Build estimator $\widehat{\mathcal{Z}}_M(\ell)$ using $\{\bm z_m\}_{m=1}^M\overset{\mathrm{iid}}{\sim}p(\bm z)$ following Equation~\eqref{eq:z-est};
                \STATE $\{\tilde{\bm z}_t\}_{t=1}^{T}\gets\textup{Langevin algorithm}(\ell,B,h^\LF, \nabla h^\LF,\tau,T)$ in Algorithm~\ref{alg:langevin-sampling};
                \STATE Uniformly select subset $\{\tilde{\bm z}_i\}_{i=1}^{N}\subseteq \{\tilde{\bm z}_t\}_{t=1}^{T}$
                \STATE Evaluate $\widehat{P}^\BF_{M, N}$ as in Equation~\eqref{eq:bf-is-estimator} using $\{\tilde{\bm z}_i\}_{i=1}^{N}$ and $\widehat{\mathcal{Z}}_M(\ell)$.
		\end{algorithmic}
\end{algorithm}
%

\subsection{Further Discussion on Bi-fidelity Modeling}
\label{ssec:bifidelity}
A crucial aspect of any bi-fidelity modeling is understanding how the similarity between LF and HF models affects the performance of the proposed bi-fidelity algorithm, while we investigate from two perspectives: the variance of the L-BF-IS estimator and the Kullback-Leibler (KL) divergence between the optimal and the proposed biasing distributions.

Recall that in Section~\ref{ssec:background} we define subsets $\AH \subset \Omega$ and $\AL \subset \Omega$ such that $\bm{z} \in \AH$ if and only if $h^{\HF}(\bm{z}) < 0$, and $\bm{z} \in \AL$ if and only if $h^{\LF}(\bm{z}) < 0$. Note that under this definition, $\ExH = \P_p[\AH]$. The analysis in this section assumes $\ell$ is already fixed. Based on the approximated variance in Equation~\eqref{eq:bf-is-var-apx-p}, we decompose the expectation term into two parts:
\begin{equation}
\label{eqn:var-bound}
\begin{aligned}
&\E_p\left[\mathbbm{1}_{h^\HF(\bm z)<0}\exp\left(\ell\tanh\circ h^\LF(\bm z) \right)\right]
= \int_{\AH} \exp\left(\ell\tanh\circ h^\LF(\bm z) \right) p(\bm z)d\bm z\\
&= \int_{\AH\cap \AL} \exp\left(\ell\tanh\circ h^\LF(\bm z) \right) p(\bm z)d\bm z + \int_{\AH\cap \AL^C} \exp\left(\ell\tanh\circ h^\LF(\bm z) \right) p(\bm z)d\bm z,
\end{aligned}
\end{equation}
where $\AL^C \coloneqq \Omega \setminus \AL$ is the complement. For the first term in Equation~\eqref{eqn:var-bound}, since $\bm z\in \AL$, we have $h^\LF(\bm z)<0$ and thus $\tanh\circ h^\LF(\bm z) < 0$, making this term upper bounded by $\P_p[\AH\cap \AL]$, which is equivalent to $\ExH - \P_p[\AH\cap \AL^C]$. The second term, since $\tanh\circ h^\LF(\bm z) < 1$ for all $\bm z$, is thereby bounded above by $e^\ell \P_p[\AH\cap \AL^C]$. 

Applying a similar methodology, we also bound $\mathcal{Z}(\ell) < 1+(e^\ell-1)\P_p[\AL]$; see \ref{lm:cont-bound} for detailed proofs. Thus, assuming $M$ is sufficiently large, the variance of the L-BF-IS estimator in Equation~\eqref{eq:bf-is-var-apx} is upper bounded as:
\begin{equation}
\Var_{p\otimes q}[\widehat{P}^\BF_{M,N}] \lesssim \frac{1+(e^\ell-1)\P_p[\AL]}{N}(\ExH + (e^\ell-1) \P_p[\AH\cap \AL^C]) - \frac{(\ExH)^2}{N}.
\end{equation}
The terms $\P_p[\AL]$ and $e^\ell \P_p[\AH\cap \AL^C]$ represent penalties arising from mismatches between the HF and LF models. Should the LF model perfectly align with the HF model, these terms vanish; see Figure~\ref{fig:AHAL-4cases}. This bound elucidates that the performance of L-BF-IS is contingent on $\P_p[\AH\cap \AL^C]$, and further analysis of the KL divergence will verify this observation.

In addition to the variance analysis, we examine the KL divergence between the proposed biasing distribution in Equation~\eqref{eq:q-xi} and the optimal distribution in Equation~\eqref{eq:opt-q}, given by
\begin{equation}
\label{eq:kl-div}
\KL(q^\ast\|q) = \E_{q^\ast}\left[\log\frac{\mathcal{Z}(\ell)\mathbbm{1}_{h^\HF(\bm z) < 0}}{\ExH\exp\left(-\ell\tanh\circ h^\LF(\bm z)\right)}\right],
\end{equation}
or its simplification
\begin{equation}
\KL(q^\ast\|q) = \log\frac{\mathcal{Z}(\ell)}{\ExH} + \ell\int_{\AH\cap \AL}\tanh\circ h^\LF(\bm z)p(\bm z)d\bm z + \ell\int_{\AH\cap \AL^C}\tanh\circ h^\LF(\bm z)p(\bm z)d\bm z.
\end{equation}
Here, the integrals represent contributions from the regions where high-fidelity and low-fidelity models coincide and where they do not, respectively. Consequently, the KL divergence can be bounded by
\begin{equation}
\label{eq:kl-bound}
\KL(q^\ast\|q) < \log\frac{1+(e^\ell-1)\P_p[\AL]}{\ExH} + \ell\P_p[\AH\cap \AL^C].
\end{equation}
The expression in Equation~\eqref{eq:kl-bound} indicates that the optimality of the proposed biasing distribution depends significantly on $\P_p[\AH\cap \AL^C]$. The proofs supporting these claims are provided in \ref{apdx:simp-kl}. Note that since the optimal biasing distribution $q^*$ is fixed, the KL divergence is equivalent to the cross entropy criteria presented in \cite{kurtz2013cross, geyer2019cross}. 

\begin{figure}[ht!]
    \centering
    \begin{minipage}{0.23\textwidth}
        \centering
        \includegraphics[width=\linewidth, trim={4cm 4cm 4cm 4cm}, clip]{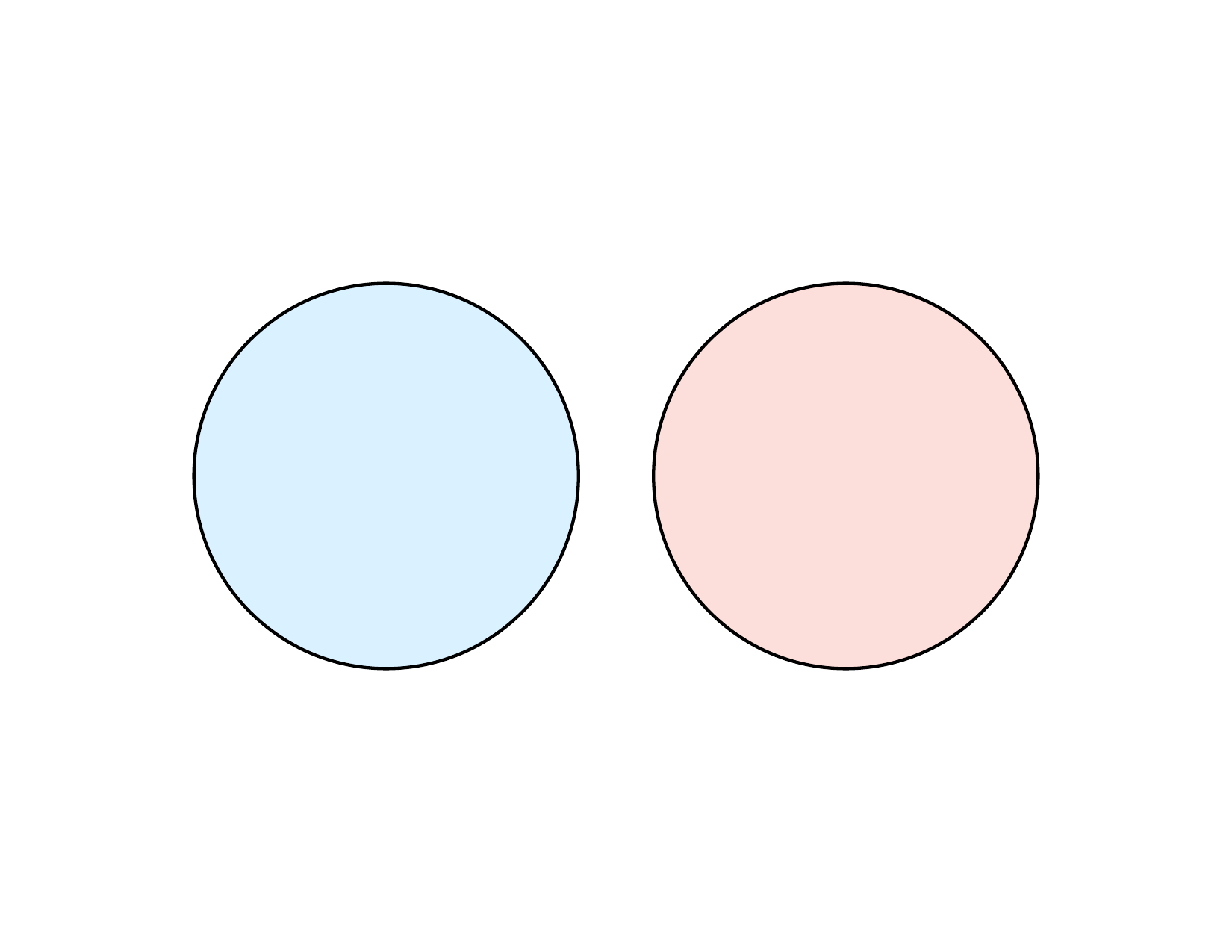}
        \subcaption{}
    \end{minipage}%
    \hfill
    \begin{minipage}{0.23\textwidth}
        \centering
        \includegraphics[width=\linewidth, trim={6cm 5cm 4cm 4cm}, clip]{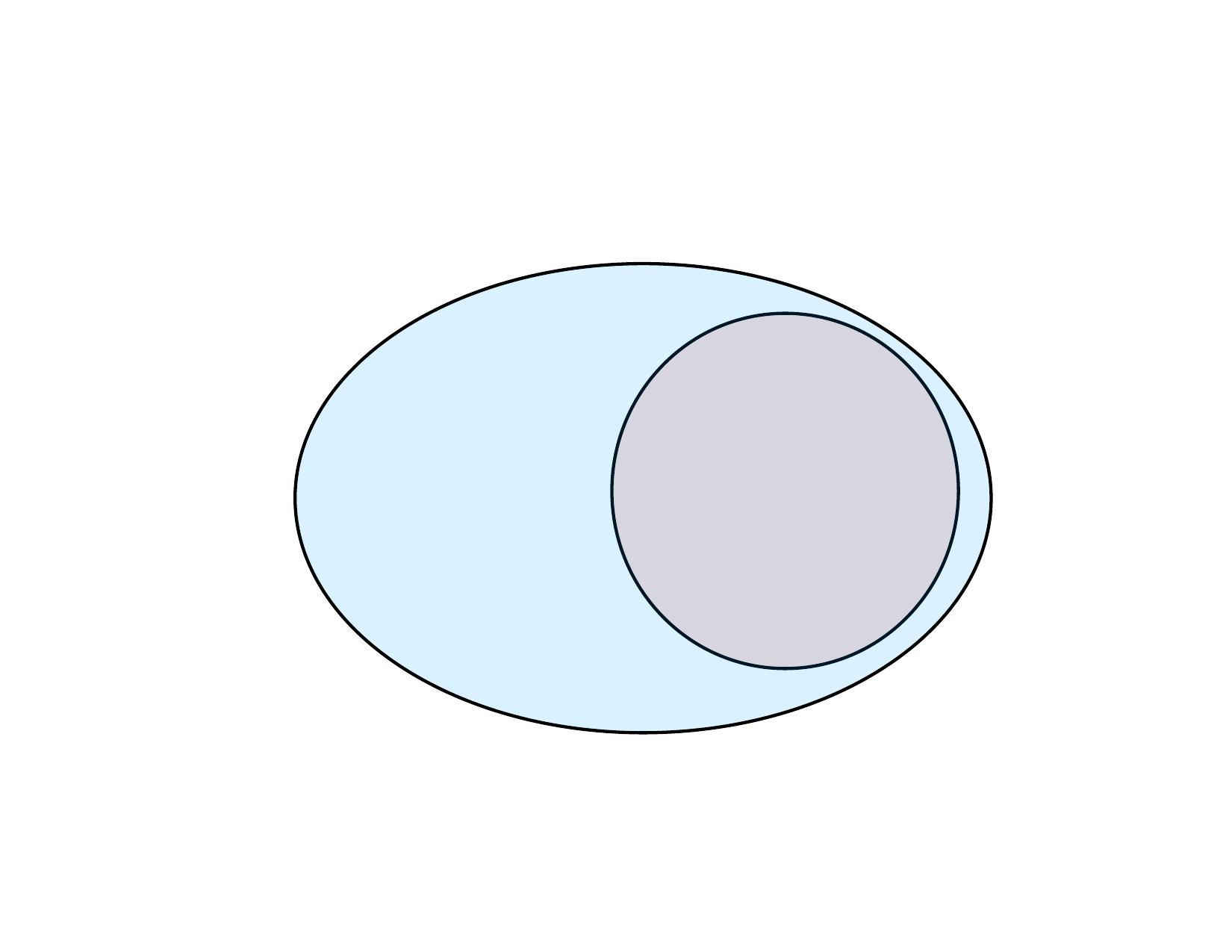}
        \subcaption{}
    \end{minipage}
    \hfill
    \begin{minipage}{0.23\textwidth}
        \centering
        \includegraphics[width=\linewidth, trim={6cm 6cm 6cm 4cm}, clip]{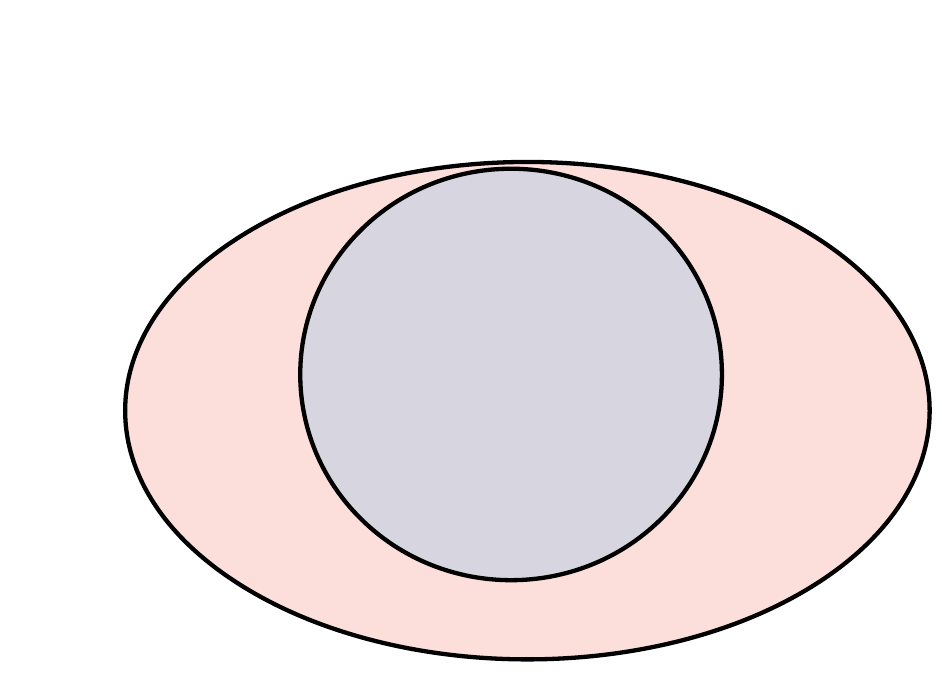}
        \subcaption{}
    \end{minipage}%
    \hfill
    \begin{minipage}{0.23\textwidth}
        \centering
        \includegraphics[width=\linewidth, trim={5.2cm 5cm 4cm 4cm}, clip]{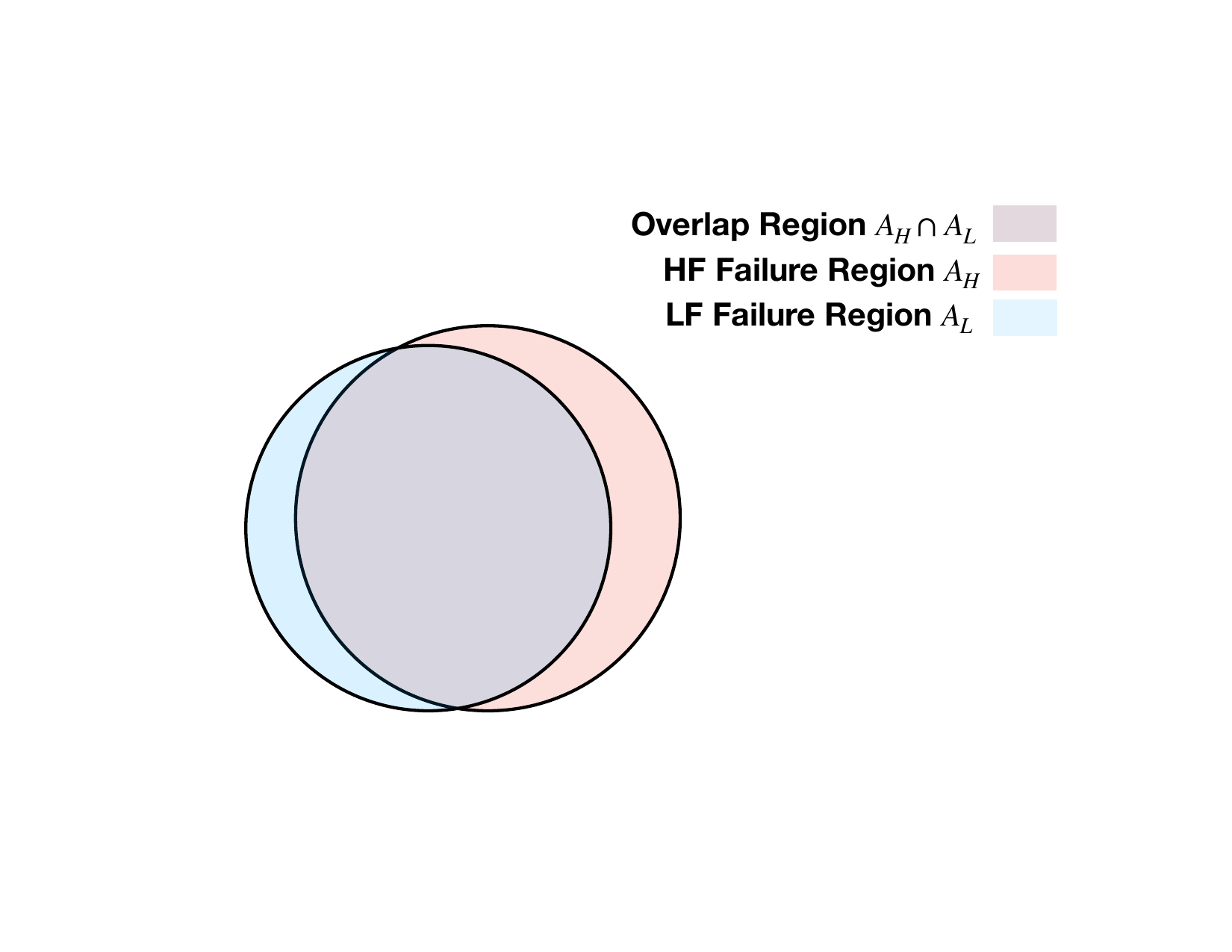}
        \subcaption{}
    \end{minipage}
    \caption{Illustration of the trade-off between $\P_p[\AL]$ and $\P_p[\AH \cap \AL^C]$ when $D=2$. Case 1 (a) represents the worst scenario, where there is no overlap between $\AH$ and $\AL$. In Case 2 (b), we observe an extreme case where $\P_p[\AH \cap \AL^C]$ is zero, but $\P_p[\AL]$ becomes excessively large. Case 3 (c) presents a scenario where $\P_p[\AL]$ is small, but $\P_p[\AH \cap \AL^C]$ is significantly large. Lastly, Case 4 (d) shows a favorable scenario resulting in a small values for both $\P_p[\AL]$ and $\P_p[\AH \cap \AL^C]$. }
    \label{fig:AHAL-4cases}
\end{figure}

While $\P_p[\AH\cap \AL^C]$ is a key in describing the alignment between the HF and LF models, computing it requires many HF model evaluations, which is inpractical. One possible way to address this problem is to use a small number of pilot HF samples to evaluate the KL divergence in Equaiton~\eqref{eq:kl-div}. A systematic framework of the alignment between the LF and HF models is out of the scope of this work but can be the focus of the future works.

\subsection{Error Analysis}
\label{ssec:error}

Two principal types of errors are identified as contributing to an increase in the MSE: bias-inducing error and variance-inducing error. This section delves into both those error types.

The bias-inducing error arises from inaccuracies in MALA, as outlined in Section~\ref{ssec:sampling}. A series of studies have investigated the convergence behavior of Langevin Monte Carlo, especially under the convexity assumption of the potential function $U(\bm z)$ in Equation~\eqref{eq:U-def}. These studies have shown that Langevin algorithm's output tends to converge to the target distribution $q(\bm z)$ across several metrics, including total variation \cite{dalalyan2017theoretical, durmus2017nonasymptotic}, Wasserstein-2 distance \cite{durmus2019analysis}, and KL divergence \cite{cheng2018convergence}. 
However, the convexity of $U(\bm z)$ may not always hold, particularly for target densities $q(\bm z)$ with multimodal features. The inaccurate sampling of $q(\bm z)$ lead to biases in L-BF-IS estimations. A mitigation strategy involves launching multiple Langevin dynamics chains from different initial states. 

The variance-inducing error originates from two sources. The first is the discrepancy between LF and HF functions. According to the analysis in Section~\ref{ssec:bifidelity}, this discrepancy is quantified by the probabilities $\P_p[\AH\cap \AL^C]$ and $\P_p[\AL]$. Lower values of these probabilities suggest a smaller estimation variance, hence smaller MSE. The second source of variance-inducing error relates to the selection of the parameter $\ell$, as described in Equations~\eqref{eq:var-est1} and \eqref{eq:var-est2}. Given the limited access to HF function evaluations in one approach (Section~\ref{sssec:ell-one}) or the complete avoidance of HF samples for selecting $\ell$ in another approach (Section~\ref{sssec:ell-two}), a deviation between the chosen $\ell^*$ and the true optimal $\ell$ that minimized Equation~\eqref{eq:bf-is-var-apx-p} inevitably arises. This deviation contributes to an increase in the variance of L-BF-IS estimator and, consequently, its MSE. We acknowledge that fully addressing these challenges, particularly in mitigating bias-inducing and variance-inducing errors, remains an open problem that forms the basis of a future work.

\section{Empirical Results}
\label{sec:experiments}
This section presents empirical results to illustrate the effectiveness of the L-BF-IS estimator. In Section~\ref{ssec:toy}, a simple 1D function demonstrates the applicability of the MALA on sampling a multi-modal biasing distribution. Then, in Section~\ref{ssec:synthetic}, the L-BF-IS is applied to two different cases: an 8-dimensional Borehole function (detailed in~\ref{sssec:borehole}) and a 1000-dimensional synthetic function (detailed in~\ref{sssec:1000-dim}). The application of the L-BF-IS is shown on two real-world failure probability estimation problems in Section~\ref{ssec:real-prob}, including a composite beam problem (explained in Section~\ref{sssec:beam}) that uses the Euler-Bernoulli equation as an LF model and a steady-state stochastic heat equation (described in Section~\ref{sssec:heat}) with a data-driven LF model based on a pre-trained physics-informed neural operator.

To evaluate the accuracy of our estimations, we use the relative root mean square error (rRMSE), 
\begin{align}
\textup{rRMSE}(N) \coloneqq \sqrt{\mathbb{E}\left[\frac{(\widehat{P}_N-\ExH)^2}{\ExH^2}\right]},
\end{align}
where $\widehat{P}_N$ is the estimator using $N$ iid HF samples. The performance of the L-BF-IS estimator $\widehat{P}^{\text{BF}}_N$ (formulated in Equation~\eqref{eq:bf-is-estimator}) is compared with the standard Monte Carlo estimator $\widehat{P}^{\textup{MC}}_N$ (defined in Equation~\eqref{eq:mc-def}) across all problems. We also produce the LF failure probability, denoted as $\ExL$, which is solely generated from $1\times 10^6$ $h^\LF$ evaluations. For the problems where the input dimension $D \leq 10$, we also consider the results from the Multi-fidelity Importance Sampling (MF-IS) estimator \cite{peherstorfer2016multifidelity}, which uses a biasing distribution created by a Gaussian mixture model. The number of clusters for MF-IS is chosen from $k = \{1, 3, 5, 10\}$, so that the chosen $k$ yields the best performance. We assume the computational costs of HF models are substantially higher than those of the LF models so that the costs of LF forward and derivative evaluations can be ignored. The initial point $\bm z^{(0)}$ of the MALA is typically chosen as the center of the input space. The proposed method requires $\mathcal{O}(M+T+B)$ forward LF model evaluations (typically around $\sim 1\times 10^6$ evaluations) and $\mathcal{O}(N + L)$ forward HF model evaluations, usually between 1 and $\sim 1\times 10^4$.

The experimental component of this study is primarily concerned with scenarios exhibiting a failure probability between $1\%$ and $5\%$. To identify an appropriate LF failure threshold, 
1000 LF QoIs are generated to establish a tentative threshold, ensuring its inducing failure probability is also between $1\%$ and $5\%$ and potentially close to $\ExH$. This procedure is adopted because, for certain LF/HF models (such as the 1000-dimensional problem discussed in Section~\ref{sssec:1000-dim}), there is a notable discrepancy between the values of the LF and HF QoIs. Consequently, applying the same threshold to both models may result in inaccurate probability estimates. In practice, while the HF failure probability $\ExH$ is the goal of estimation, {\it a prior} knowledge of potential range of values is available. Such an estimate is instrumental in establishing a criterion for assessing the utility of LF QoIs within L-BF-IS. 

\subsection{A Simple Bimodal Function for Demonstrating Langevin Algorithm}
\label{ssec:toy}

In failure probability estimation, the multimodal issue occurs when multiple sub-areas in $\Omega$ correspond to failure. The goal of this example is to empirically show that the MALA is capable to address this issue through a 1D example, where
\begin{equation}
h(z) = -(\sin(\pi z)+0.95)(\sin(\pi z) - 0.95).
\end{equation}
The density $p(z)$ is assumed to be uniform between $-1$ and $1$. We choose $\ell=5.0$. The function $h(z)$ in Figure~\ref{fig:toy-function} and the densities $p(z), q(z)$ in Figure~\ref{fig:toy-density} are provided. The biasing density $q^\ast(z)$ shows the bimodal property and we will show that the Langevin algorithm is possible to generate samples from it.

\begin{figure}[ht!]
    \centering
    \begin{minipage}{0.32\textwidth}
        \centering
        \includegraphics[width=\linewidth]{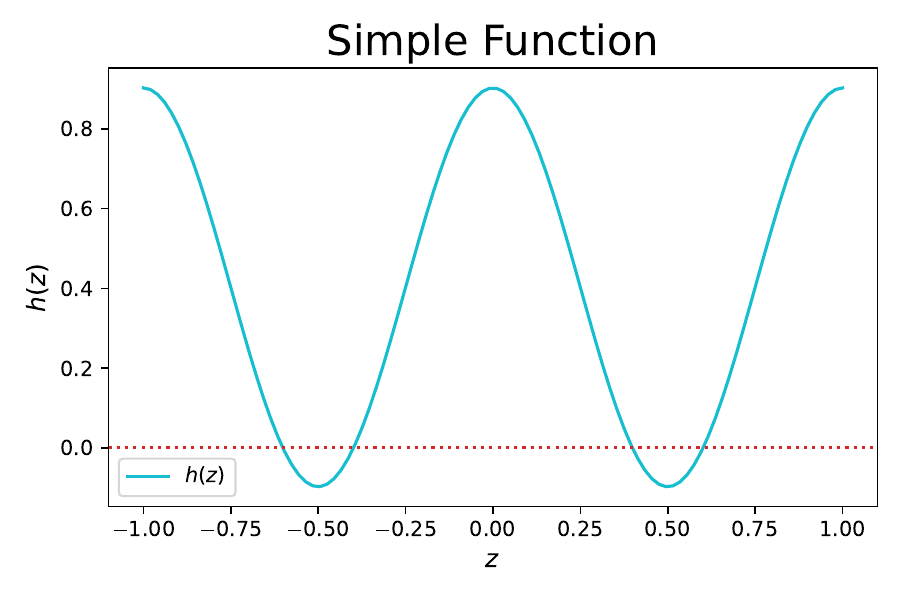}
        \subcaption{}
        \label{fig:toy-function}
    \end{minipage}%
    \hfill
    \begin{minipage}{0.32\textwidth}
        \centering
        \includegraphics[width=\linewidth]{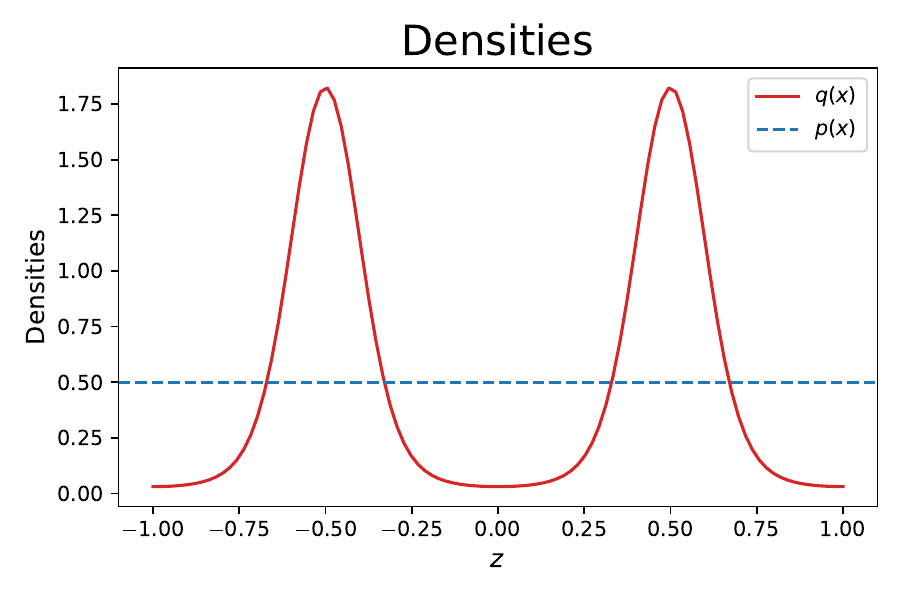}
        \subcaption{}
        \label{fig:toy-density}
    \end{minipage}
    \hfill
    \begin{minipage}{0.32\textwidth}
        \centering
        \includegraphics[width=\linewidth]{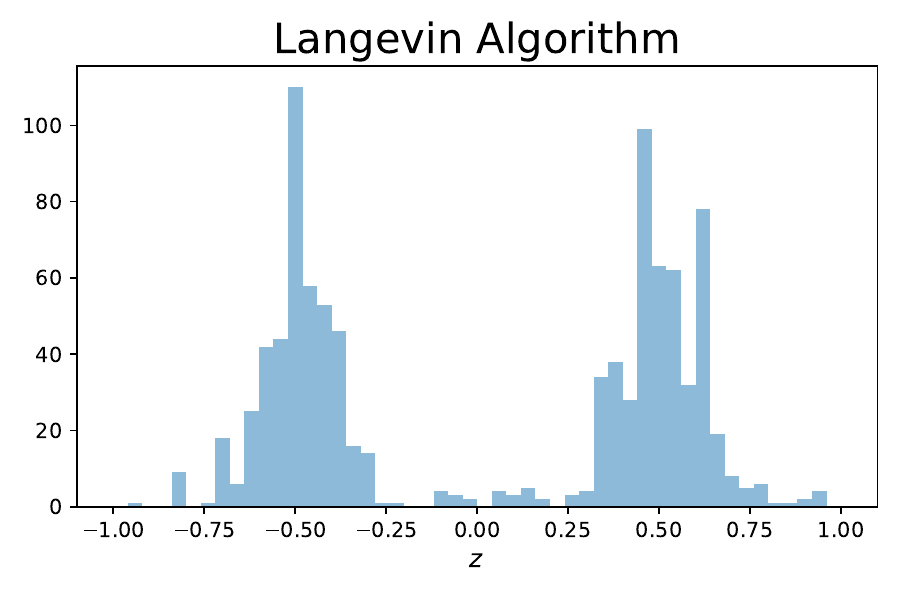}
        \subcaption{}
        \label{fig:toy-sampling}
    \end{minipage}%

    \caption{(a) The example function $h(z)$ and the $0$ threshold. (b) Densities $p(z)$ and $q(z)$ with $\ell=5.0$. (c) Histogram of $1,000$ samples of $q(z)$ generated from the Langevin algorithm described in Algorithm~\ref{alg:langevin-sampling}.}
    \label{fig:toy-case}
\end{figure}

We implement the Langevin algorithm described in Algorithm~\ref{alg:langevin-sampling}, where we set the starting point at $z=0$. The step size $\tau=0.05$ and burn-in number $B=200$. We initiate the Langevin algorithm 100 times and for each chain, we collect 10 samples after the burn-in number. The collected samples are shown in Figure~\ref{fig:toy-sampling}. As we can see, the bimodal shape of the biasing density $q(z)$ is captured by the Langevin algorithm. 


\subsection{Synthetic Examples with Prescribed Functions}
\label{ssec:synthetic}
\subsubsection{Borehole Function}
\label{sssec:borehole}
We applied the borehole function described in \cite{morris1993bayesian}, which is extended to a multi-fidelity setting in \cite{xiong2013sequential}. It is an 8-dimensional problem that models water flow through a borehole. Following \cite{xiong2013sequential}, the HF QoI function is 
\begin{align}
    f^\HF(\bm z) = \frac{2\pi z_3(z_4-z_5)}{(z_2-\log z_1)\left(1+\frac{2z_7z_3}{(z_2-\log z_1)z_1^2z_8}+\frac{z_3}{z_5}\right)},
\end{align}
and the LF QoI function is 
\begin{align}
    f^\LF(\bm z) = \frac{5z_3(z_4-z_5)}{(z_2-\log z_1)\left(1.5+\frac{2z_7z_3}{(z_2-\log z_1)z_1^2z_8}+\frac{z_3}{z_5}\right)}.
\end{align}
The random inputs $\bm z$ and their distributions are presented in Table~\ref{table:borehole}.
\begin{table}[ht]
	\centering
	\caption{
		The stochastic input ranges, distributions, and physical meanings of the Borehole function.
	}
    \begin{tabular*}{\linewidth}{@{\extracolsep{\fill}} lcl }
	\toprule
	  Range & Distribution & Physical Meaning  \\
	\midrule
	$z_1\in[0.05,0.15]$ &$\mathcal{N}(0.10,0.016)$ 	&  radius of borehole (m)\\
    $z_2\in[4.605,10.820]$ &$\mathcal{N}(7.71,1.0056)$ 	&  radius of influence (m)\\
    $z_3\in[63070,115600]$ & $U[63070,115600]$ 	&  transmissivity of upper aquifer (m$^2$/yr)\\
    $z_4\in[990,1110]$ & $U[990,1110]$ 	&  potentiometric head of upper aquifer (m)\\
    $z_5\in[63.1,116]$ & $U[63.1,116]$ 	&  transmissivity of lower aquifer (m$^2$/yr)\\
    $z_6\in[700,820]$ & $U[700,820]$ 	&  potentiometric head of lower aquifer (m)\\
    $z_7\in[1120,1680]$ & $U[1120,1680]$ 	&  length of borehole (m)\\
    $z_8\in[9855,12045]$ & $U[9855,12045]$ 	&  hydraulic conductivity of borehole (m/yr)\\
	\bottomrule
    \end{tabular*}
	\label{table:borehole}
\end{table}
We define the HF function $h^\HF(\bm{z})$ as $800 - f^\HF(\bm{z})$. To empirically prevent the Langevin Markov chain from moving outside the domain $\Omega$, we introduce an additional penalty term of $100\lVert\bm{z}\rVert^2$ when $\bm{z}$ is outside $\Omega$. Similarly, the LF function $h^\LF(\bm{z})$ is defined as $1000 - f^\LF(\bm{z})$ within the specified domain; otherwise, it takes the penalty term $100\lVert\bm{z}\rVert^2$.  


In Figure~\ref{fig:borehole-ell}, the estimated variance of L-BF-IS estimator using two different approaches for tuning $\ell$ are demonstrated with $L=1\times 10^2$ HF trials and $M=1\times 10^6$ LF evaluations for length scale selection. The uncertainty of the variance estimate is notably higher in the first approach compared to the second, primarily due to the limited number of HF function evaluations available for choosing $\ell$, which significantly raises the likelihood of estimating the variance as zero. 

\begin{figure}[ht!]
	\centering
	\includegraphics[width = 1.0\textwidth]{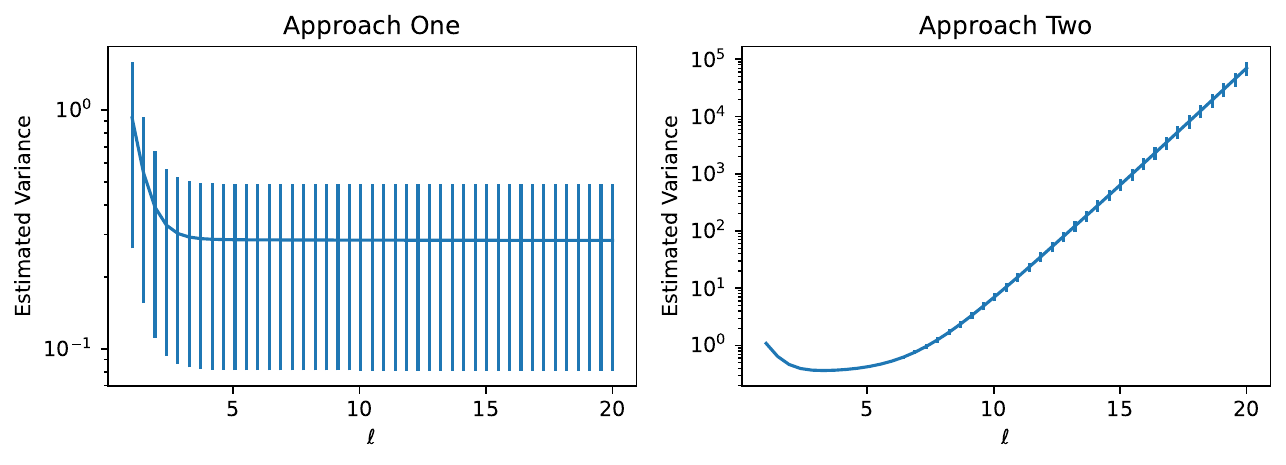}
	\caption{Estimated variance of L-BF-IS for different $\ell$ values with $95\%$ confidence interval using $L=1\times 10^2$ HF evaluations (approach one) and $M=1\times 10^6$ LF evaluations (approach two) for the borehole function in Section~\ref{sssec:borehole}. Approach one exhibits higher estimation uncertainty, whereas approach two is more robust.}
	\label{fig:borehole-ell}
\end{figure}
\begin{figure}[ht!]
    \begin{minipage}{0.47\textwidth}
        \includegraphics[width=1.0\textwidth]{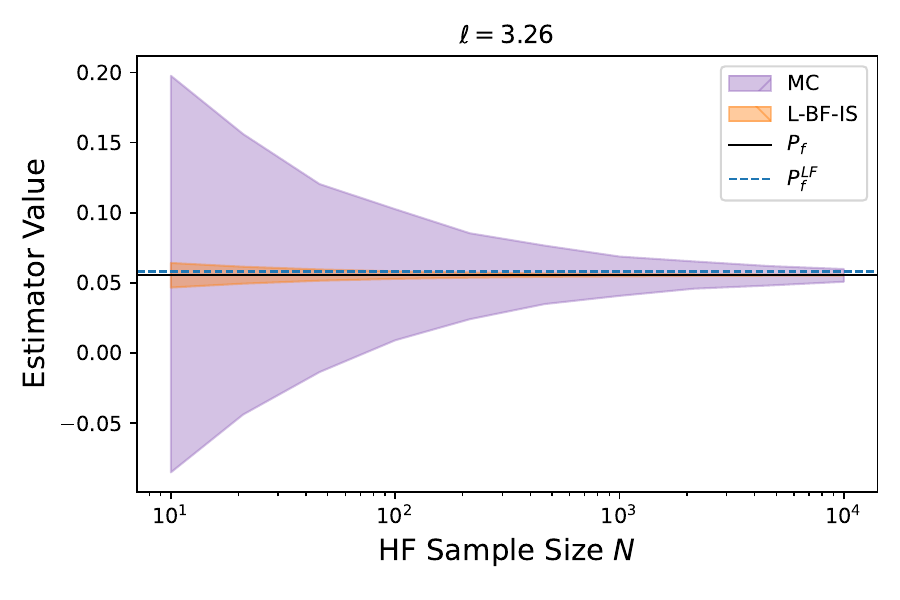}
        \subcaption{}
        \label{fig:borehole-ell326values}
    \end{minipage}
    \hfill
    \begin{minipage}{0.47\textwidth}
        \includegraphics[width=1.0\textwidth]{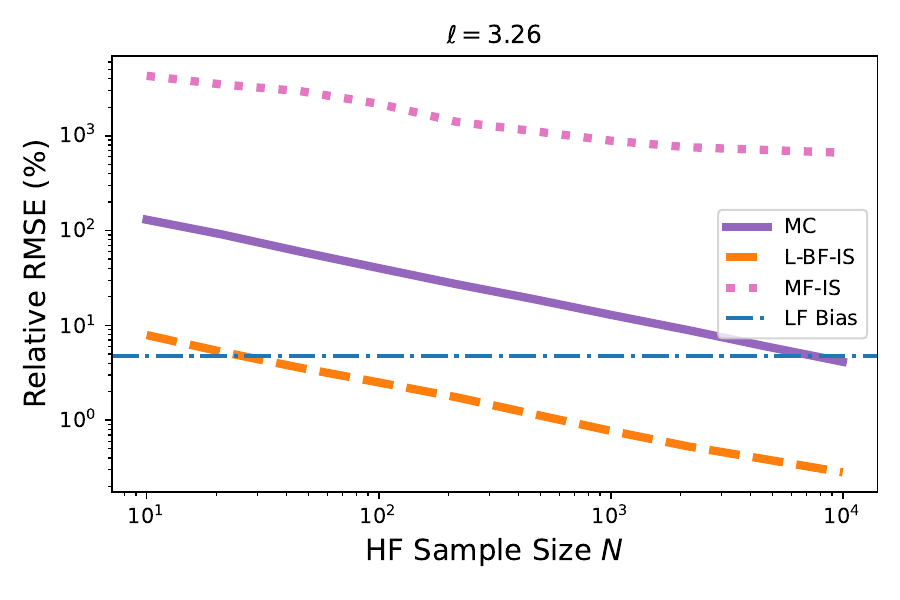}
        \subcaption{}
        \label{fig:borehole-ell326rrmse}
    \end{minipage}
    
    \begin{minipage}{0.47\textwidth}
        \includegraphics[width=1.0\textwidth]{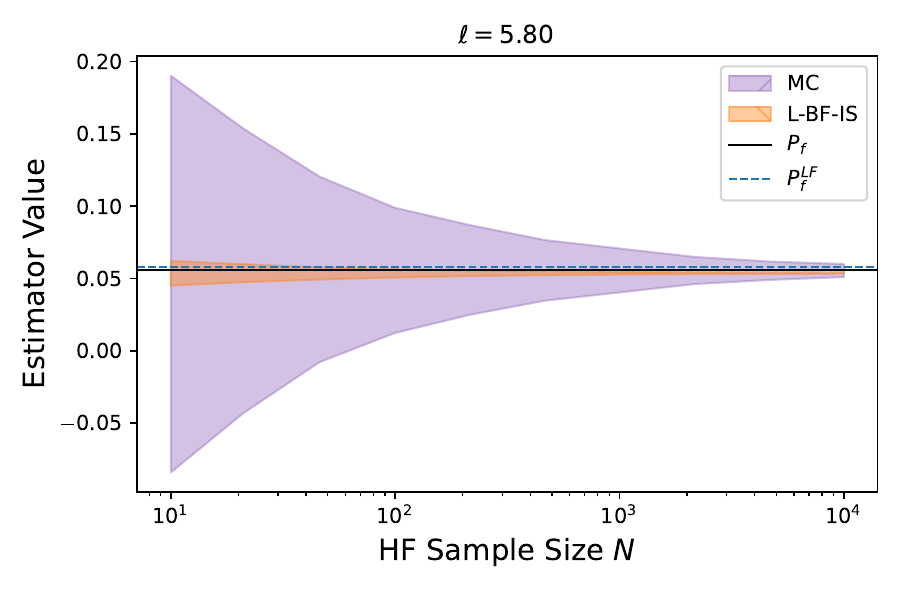}
        \subcaption{}
        \label{fig:borehole-ell580values}
    \end{minipage}
    \hfill
    \begin{minipage}{0.47\textwidth}
        \includegraphics[width=1.0\textwidth]{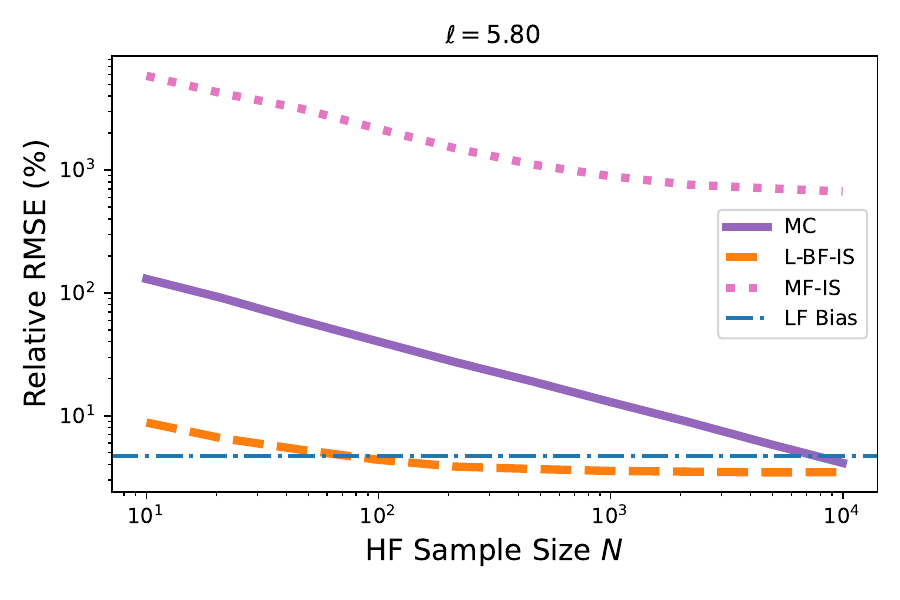}
        \subcaption{}
        \label{fig:borehole-ell580rrmse}
    \end{minipage}

    \begin{minipage}{0.47\textwidth}
        \includegraphics[width=1.0\textwidth]{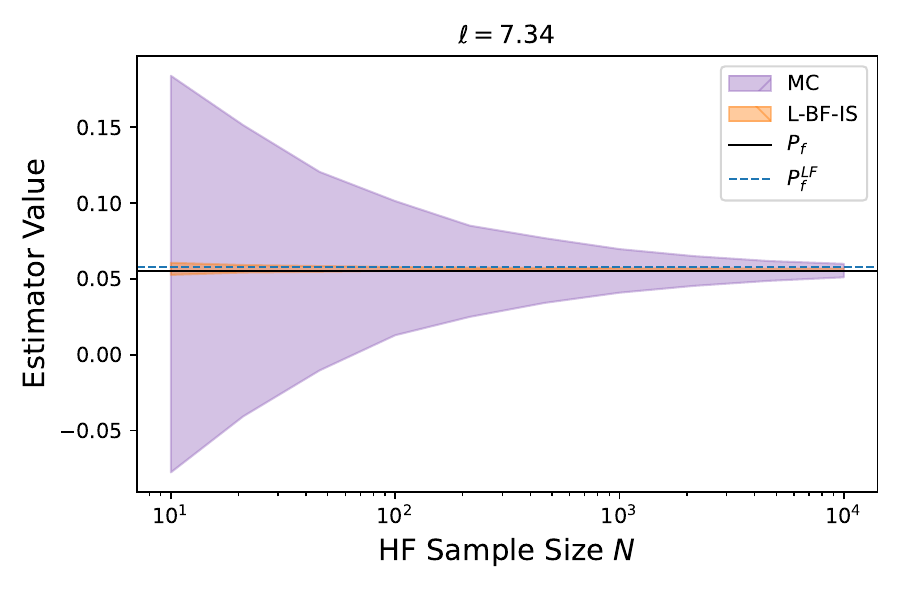}
        \subcaption{}
        \label{fig:borehole-ell734values}
    \end{minipage}
    \hfill
    \begin{minipage}{0.47\textwidth}
        \includegraphics[width=1.0\textwidth]{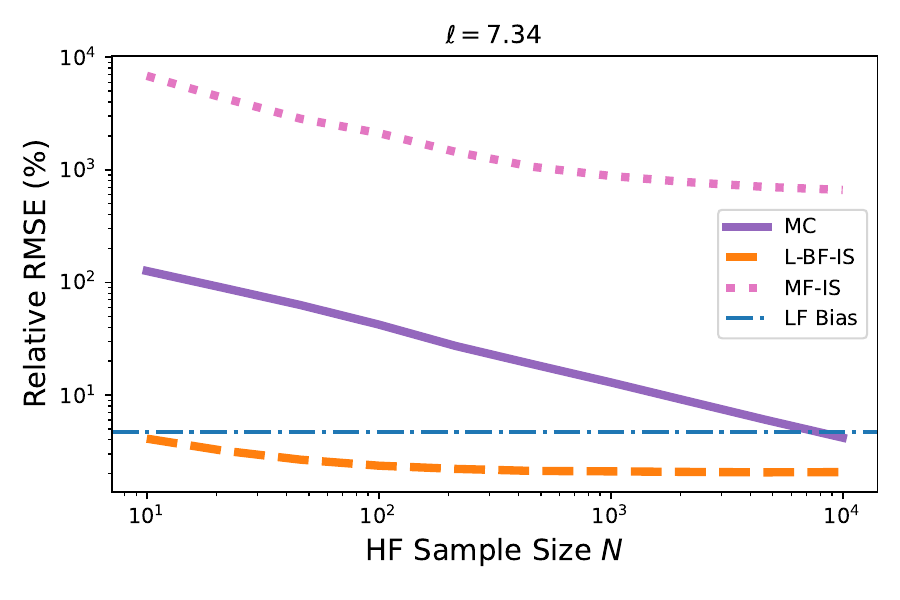}
        \subcaption{}
        \label{fig:borehole-ell734rrmse}
    \end{minipage}
    \caption{Convergence behavior of L-BF-IS (dash) for $\ell$ values of 3.26 (a-b), 5.80 (c-d), and 7.34 (e-f), compared with standard Monte Carlo (solid), MF-IS (dot), and LF failure probability (dash dot) using 10 Gaussian mixture clusters for the borehole function in Section~\ref{sssec:borehole}. The blue dash dotted lines are LF failure probabilities. The shaded areas represent the $95\%$ confidence interval from $1,000$ trials.}
    \label{fig:borehole}
\end{figure}

To demonstrate the robustness of L-BF-IS with respect to the choice of $\ell$, we compare the convergence results of L-BF-IS with standard Monte Carlo and MF-IS across three different $\ell$ values in Figure~\ref{fig:borehole}. The MALA step size $\tau$ is set to $1\times 10^{-4}$, with a burn-in value of $B = 1\times 10^3$ and an iteration number $T=1\times 10^4$. For the convergence analysis, the estimates are computed for HF sample size $N$ as $10,21,46,100,215,464,1000,2154,4641$, and $10000$ across $1000$ trials to determine the $95\%$ confidence intervals. In this example, the LF model produces similar results to the HF model, with $5\%$ relative error in estimating $\ExH$. When $\ell$ is set to $3.26$ (following approach two), the L-BF-IS successfully reduces the relative RMSE to $0.3\%$. However, when $\ell$ is not optimally chosen, as shown in Figure~\ref{fig:borehole-ell580rrmse} with $\ell=5.80$ or Figure~\ref{fig:borehole-ell734rrmse} with $\ell=7.34$, the improvements are limited to $5\% \sim 8\%$. 

We also investigate the performance of L-BF-IS when the value of $\ExH$ is smaller and the LF model is less accurate. We choose the new LF and HF functions as $h^\LF(\bm z) = 1100 - f^\LF(\bm z)$ and $h^\HF(\bm z) = 900 - f^\HF(\bm z)$, respectively. With a smaller value of failure probability, the region that the biasing distribution should place more probabilities becomes smaller. With updated LF and HF functions, the value of $\ell$ is chosen as $3.71$ using approach two, and the corresponding convergence results are presented in Figure~\ref{fig:borehole-lp}. The relative RMSE of the LF model is $36\%$,  which is significantly larger than the previous case. We notice that the relative RMSE of L-BF-IS maintains its quality and is $1\%$, which is one order of magnitude better than the standard Monte Carlo method on the HF function.

\begin{figure}[ht!]
    \begin{minipage}{0.48\textwidth}
    \includegraphics[width=1.0\textwidth]{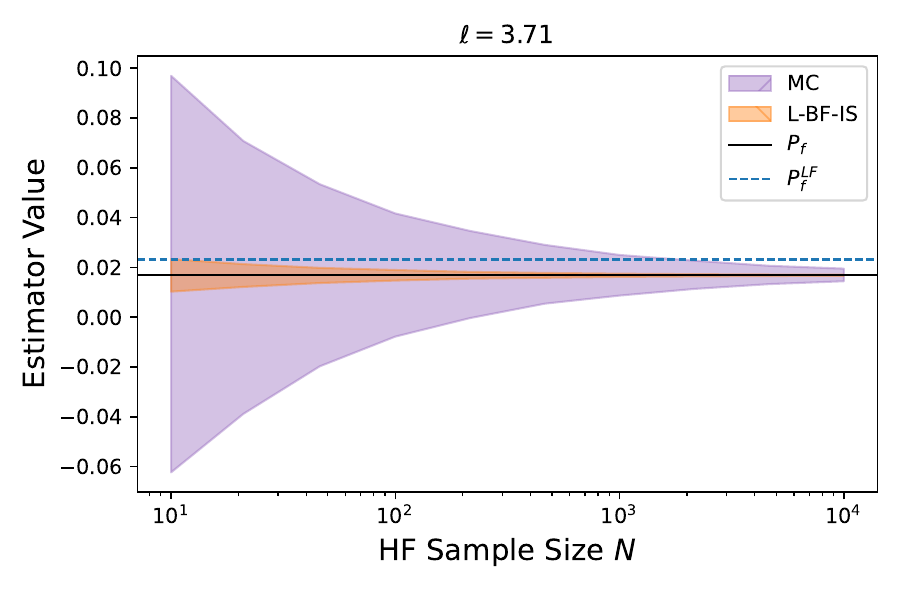}
    \subcaption{}
    \end{minipage}
    \begin{minipage}{0.48\textwidth}
    \includegraphics[width=1.0\textwidth]{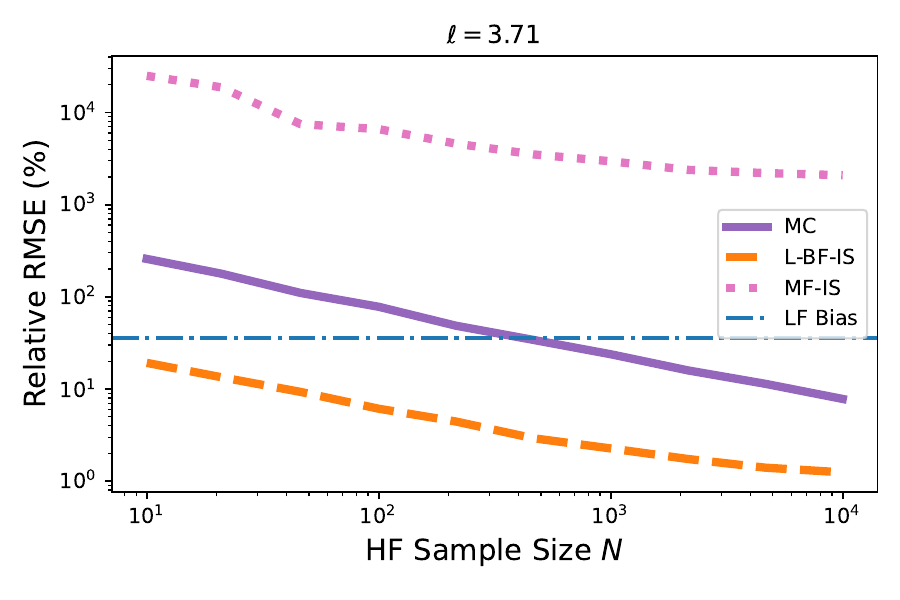}
    \subcaption{}
    \end{minipage}

    \caption{Convergence behavior of L-BF-IS (dash) for $\ell = 3.71$ compared with standard Monte Carlo (solid) and LF failure probability (dash dot) with updated LF and HF functions for the borehole function in Section~\ref{sssec:borehole}. The shaded areas represent the $95\%$ confidence interval from $1,000$ trials.}
    \label{fig:borehole-lp}
\end{figure}
\subsubsection{1000 Dimensional Synthetic Function}
\label{sssec:1000-dim}
To evaluate the performance of L-BF-IS on high-dimensional problems, we examine a 1000-dimensional problem following \cite{hampton2018basis}. The HF QoI function is defined as
\begin{align}
    f^\HF(\bm z) = \exp\left(2-\sum_{k=1}^{1000}\frac{\sin(k)z_k}{k}\right)
\end{align}
and the LF QoI function the truncated Taylor series expansion,
\begin{equation}
    f^\LF(\bm z) = \sum_{m=0}^2 (m!)^{-1}\left(2-\sum_{k=1}^{1000}\frac{\sin(k)z_k}{k}\right)^m.
\end{equation}
The HF function $h^\HF(\bm z)$ is set to $20 - f^\HF(\bm z)$ within the hypercube $[-1, 1]^{1000}$ domain, and we apply a penalty of $100\lVert\bm z\rVert^2$ outside this domain. Similarly, the LF function $h^\LF(\bm z)$ is defined as $8 - f^\LF(\bm z)$ with the same penalty applied. 


For tuning the value of $\ell$, we employed two approaches, utilizing $L=1\times 10^2$ HF trial evaluations (for approach one) and $M=1\times 10^6$ LF evaluations (for both methods) for variance estimation. These calculations were repeated ten times to estimate their variability. Unlike the borehole example in Section~\ref{sssec:borehole}, both approaches yielded similar variance estimates for this high-dimensional problem, though approach one exhibited larger variability, as depicted in Figure~\ref{fig:onek-ell}.

\begin{figure}[ht!]
	\centering
	\includegraphics[width = 1.0\textwidth]{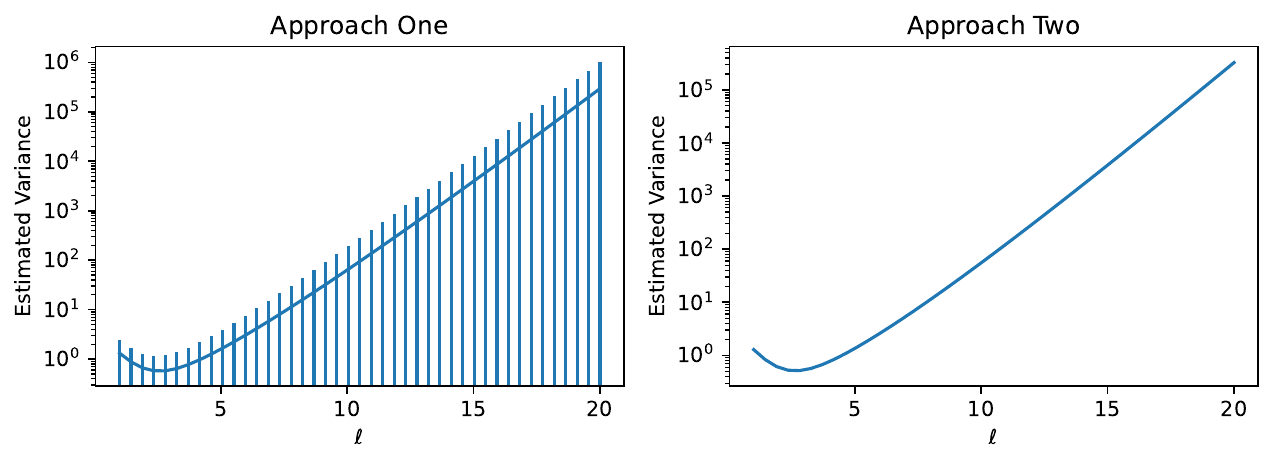}
	\caption{Estimated variance of L-BF-IS across different $\ell$ values, with uncertainty bars indicating a $95\%$ confidence interval. Estimates are based on $L=1\times 10^2$ HF evaluations (approach one) and $M=1\times 10^6$ LF evaluations (both approaches).}
	\label{fig:onek-ell}
\end{figure}

Given the consistent results in Figure~\ref{fig:onek-ell}, we selected an $\ell$ value of $2.36$ for this problem. Due to the high-dimensionality of this problem, we compared the convergence of L-BF-IS solely with the Monte Carlo method. The MALA step size $\tau$ is set to $1\times 10^{-5}$, with a burn-in number $B$ of $1\times 10^4$ and $T=1\times 10^4$ iterations. The L-BF-IS is compared with MC estimator with HF sample sizes $N$ as $10, 21, 46, 100, 215, 464, 1000, 2154, 4641$, and $10000$ across $1000$ trials to calculate the $95\%$ confidence intervals. The failure probability produced by the LF model $\ExL$ has relative RMSE of around $64\%$, while the L-BF-IS is able to reduce it to around $20\%$. However, the convergence outcomes in Figure~\ref{fig:onek} reveal a bias of $2\%$ in the L-BF-IS estimate, which we attribute to the Langevin algorithm's inaccuracies discussed in Section~\ref{ssec:error}. Despite this bias, L-BF-IS still offers a significant improvement of the MSE for smaller HF sample sizes ($N\leq 300$).
\begin{figure}[ht!]
    \begin{minipage}{0.48\textwidth}
    \includegraphics[width=1.0\textwidth]{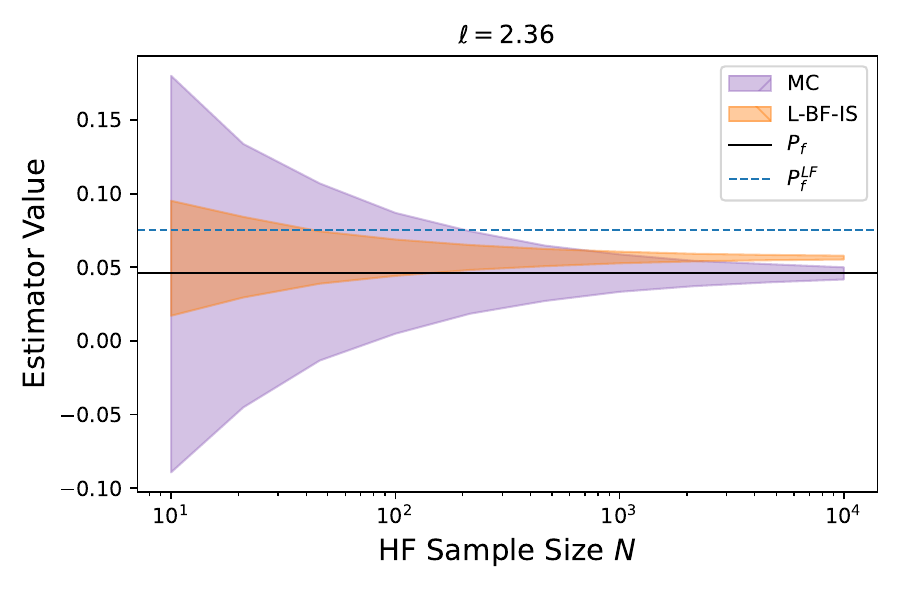}
    \subcaption{}
    \end{minipage}
    \begin{minipage}{0.48\textwidth}
    \includegraphics[width=1.0\textwidth]{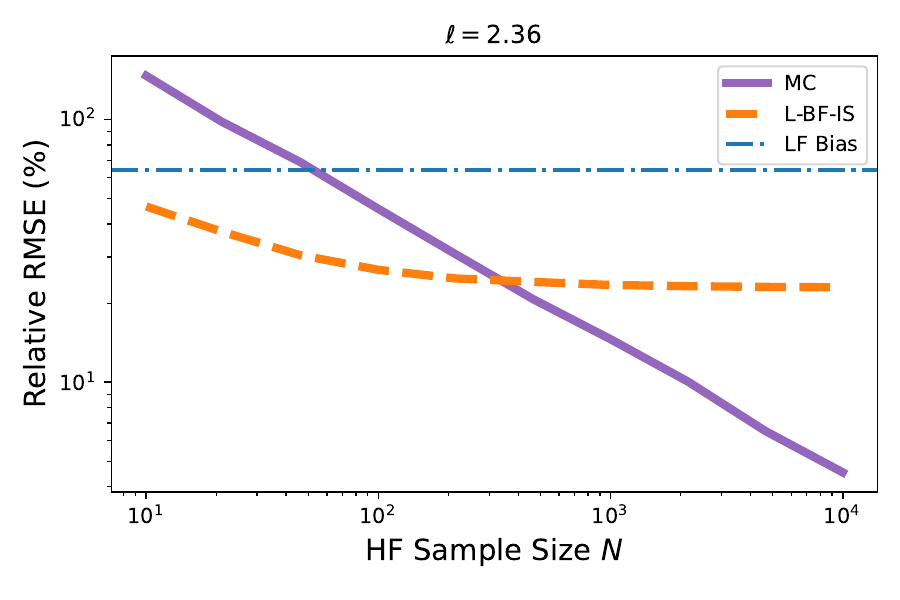}
    \subcaption{}
    \end{minipage}

    \caption{Convergence of L-BF-IS (dash) for selected $\ell=2.36$ value compared with standard Monte Carlo (solid) and LF failure probability (dash dot) for the $1000D$ problem in Section~\ref{sssec:1000-dim}.}
    \label{fig:onek}
\end{figure}
%

\subsection{Physics-based Examples}
\label{ssec:real-prob}

\subsubsection{Composite Beam}
\label{sssec:beam}

Building on the work of \cite{hampton2018practical,de2020transfer,de2022neural,cheng2024subsampling,cheng2024bi}, we examine a plane-stress, cantilever beam featuring a composite cross-section and hollow web, as depicted in Figure \ref{fig:cantilever}. The focus is on the maximum displacement of the top cord, with uncertain parameters $z_1, z_2, z_3, z_4$. Here, $z_1$ represents the intensity of the distributed force applied to the beam, while $z_2$, $z_3$, and $z_4$ denote Young's moduli of the cross-section's three components. These parameters are independent and uniformly distributed, with the input parameter dimension being $D = 4$. The QoI of this problem is the maximum displacement at the top of the beam. Table \ref{table:beam} outlines the range of input parameters alongside other deterministic parameters.

\begin{figure}[htbp]
	\centering
	\includegraphics[trim = 65mm 65mm 65mm 65mm, clip, width = 0.6\textwidth]{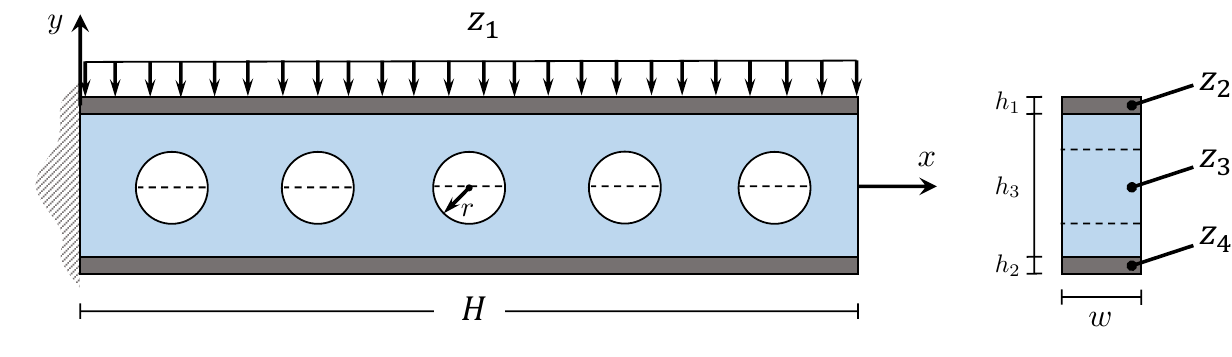}\\
	\includegraphics[trim = 0mm 80mm 0mm 70mm, clip, width = 0.7\textwidth]{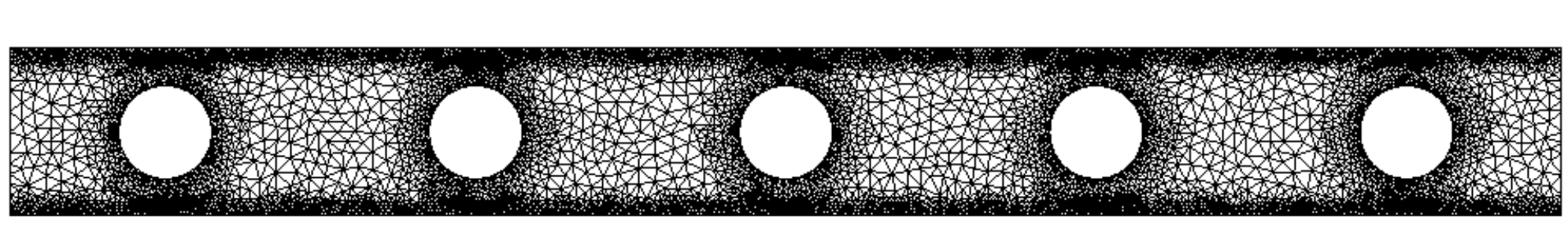}
	\caption{Top: Cantilever beam (left) and the composite cross section (right) adapted from \cite{hampton2018practical}. Bottom: Finite element mesh used to generate high-fidelity solutions.}
	\label{fig:cantilever}
\end{figure}

\begin{table}[htbp]
	\centering
	\caption{The parameter values in the composite cantilever beam model. 
		The center of the holes are at $x=\{5,15,25,35,45\}$.
		The parameters $z_1$, $z_2$, $z_3$ and $z_4$ are drawn independently and uniformly at random from the specified intervals.
		\label{table:beam}} 
	\centering
        \resizebox{0.8\textwidth}{!}{
	\begin{tabular}{ c c c c c c c c c c}   
		\toprule 
			$H$ & $h_1$ & $h_2$ & $h_3$ & $w$ & $r$ & $z_1$ & $z_2$ & $z_3$ & $z_4$ \\
		\midrule
			50  &  0.1 & 0.1 & 5 &  1 &  1.5 & $[9,11]$ & $[0.9\text{e6}, 1.1\text{e6}]$  &  $[0.9\text{e6}, 1.1\text{e6}]$ & $[0.9\text{e4}, 1.1\text{e4}]$  \\ 
		\bottomrule
	\end{tabular} 
      }
	\label{table:parameters} 
\end{table}

This study employs two models to represent the HF and LF QoI functions. The HF QoI function $f^\HF$ is obtained from a finite element analysis using a triangular mesh. In contrast, the LF QoI function $f^\LF$ is evaluated based on the Euler-Bernoulli beam theory, which simplifies the model by ignoring shear deformation and circular holes. The Euler-Bernoulli theorem provides a differential equation for vertical displacement $u(x)$, which can be explicitly solved as 
\begin{equation}\label{eq:Euler-Bernoulli}
  EI\frac{d^4u(x)}{dx^4}=-z_1 \hskip 10pt \Longrightarrow \hskip 10pt
u(x)=-\frac{z_1 H^4}{24EI}\left(\left(\frac{x}{H}\right)^4-4\left(\frac{x}{H}\right)^3+6\left(\frac{x}{H}\right)^2\right),
\end{equation}
where $E$ and $I$ represent Young's modulus and the moment of inertia, respectively. We take $E=z_4$, and the width of the top and bottom sections are $w_1=(z_2/z_4)w$ and $w_2=(z_3/z_4)w$ respectively, while all other dimensions are the same as Figure \ref{fig:cantilever} shows. 
For simulation convenience, we generate $10,000$ realizations from both HF and LF QoI functions, constructing their surrogates $\tilde{f}^\HF$ and $\tilde{f}^\LF$ using polynomial chaos expansion (PCE) with a total degree of 3. The relative MSE of $\tilde{f}^\HF$ and $\tilde{f}^\LF$ are $1.38\times 10^{-2}$ and $1.26\times 10^{-2}$, respectively. We define the HF function $h^\HF(\bm z)\coloneqq \tilde{f}^\HF(\bm z) + 4.04$ and LF function $h^\LF(\bm z) \coloneqq \tilde{f}^\LF(\bm z) + 3.18$. These two functions output negative values if the displacement of the composite beam is less than $-4.04$ (for HF) or $-3.18$ (for LF). These two functions are defined so that negative values indicate failures.

To determine the optimal $\ell$, we replicated the experimental setup used in the previous examples, employing $L=1\times 10^2$ HF trials and $M=1\times 10^6$ LF trials, with the process repeated ten times to account for uncertainty. The estimated variances for various $\ell$ values are illustrated in Figure~\ref{fig:beam-ell}, showing slight differences between the two approaches. The values of $\ell$ that minimize the variance are $14.90$ and $18.57$ for approach one and approach two, respectively. Based on these findings, we proceed with the convergence analysis using the two identified $\ell$ values.

\begin{figure}[ht!]
	\centering
	\includegraphics[width = 1.0\textwidth]{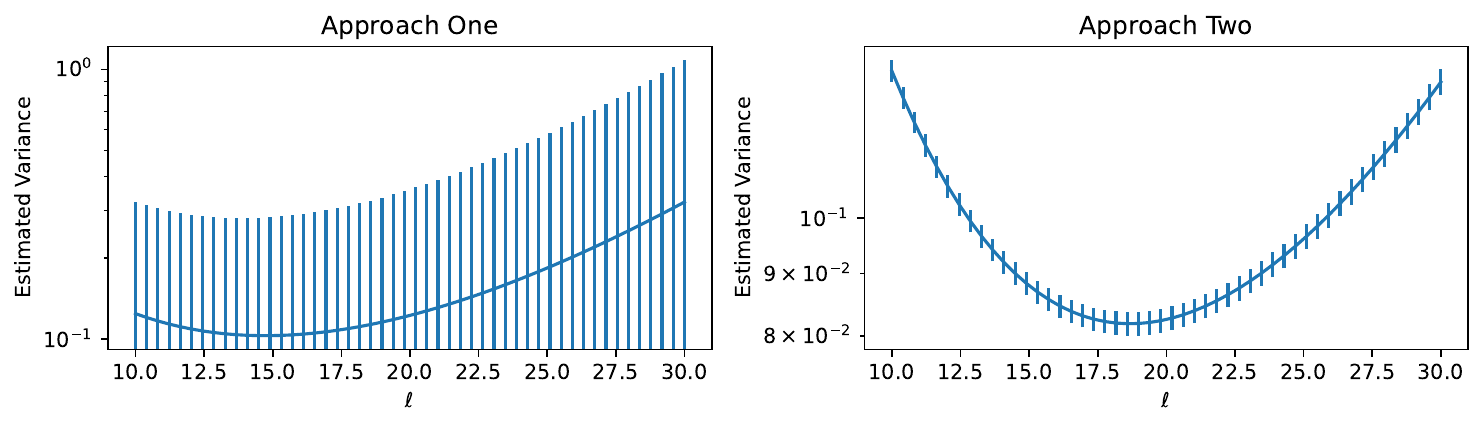}
	\caption{Estimated variance of L-BF-IS for different $\ell$ values, with uncertainty bars representing the $95\%$ confidence interval. Using $L=100$ HF evaluations (approach one) and $M=1,000,000$ LF evaluations (both approaches).}
	\label{fig:beam-ell}
\end{figure}
%

In this study, we set the MALA step size $\tau = 1 \times 10^{-3}$ and chose both the burn-in number $B$ and iteration number $T$ to be $10,000$. The starting value $\bm z^{(0)}$ is $[1\times 10^1, 1\times 10^6, 1\times 10^6, 1\times 10^4]$. This approach was compared with the MF-IS technique \cite{peherstorfer2016multifidelity}, which uses a Gaussian mixture model with 10 cluster centers. For the convergence analysis, we used HF sample sizes $N$ as $10, 21, 46, 100, 215, 464, 1000, 2154, 4641$, and $10000$ with experiments repeated $1,000$ times to assess the standard deviation of the results. The outcomes, illustrated in Figure~\ref{fig:beam}, highlight a noteworthy observation regarding the impact of an inaccurately chosen parameter $\ell$ on the L-BF-IS estimates. Specifically, with $\ell = 18.57$, the biasing distribution derived from the MALA significantly reduces the RMSE at the early stage, whereas the convergence with $\ell = 14.90$ demonstrated relatively inferior performance. Given that $\ell = 14.90$ was obtained using approach one and $\ell = 18.57$ using approach two, our findings suggest that the latter provides a more accurate determination of the optimal $\ell$ value. Note that the $L$ HF evaluations for approach one is not included in this figure. A potential reason for the observed bias of $\ell=14.90$ is that varying $\ell$ values alter the smoothness conditions of the resultant biasing densities, which in turn negatively affects the convergence performance of the Langevin algorithm.

\begin{figure}[ht!]
    \begin{minipage}{0.48\textwidth}
    \includegraphics[width=1.0\textwidth]{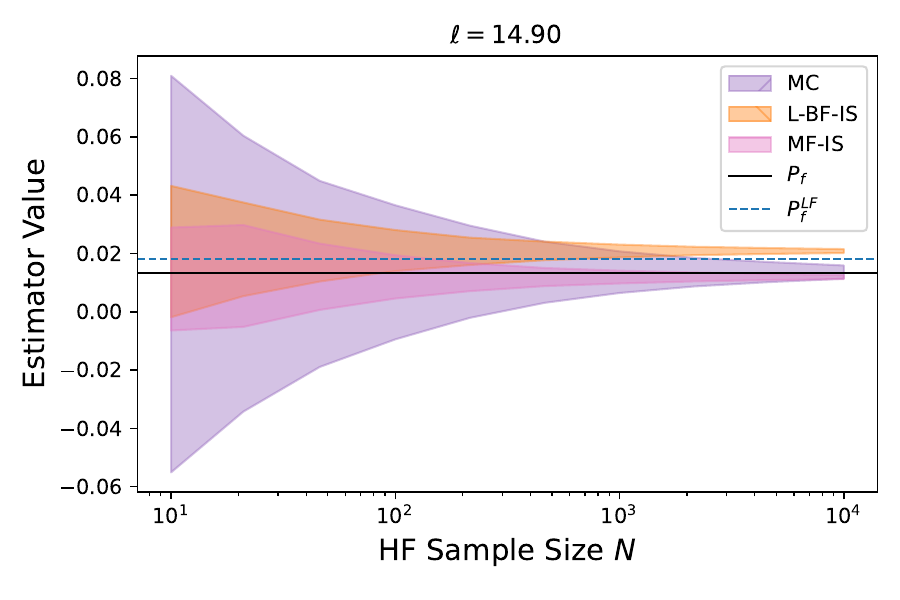}
    \subcaption{}
    \end{minipage}
    \hfill
    \begin{minipage}{0.48\textwidth}
    \includegraphics[width=1.0\textwidth]{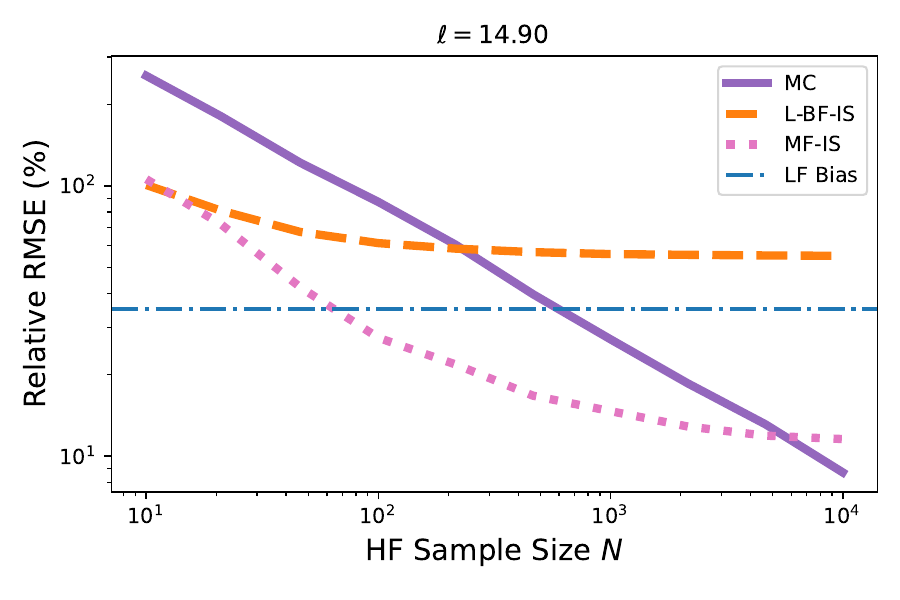}
    \subcaption{}
    \end{minipage}
    \begin{minipage}{0.48\textwidth}
    \includegraphics[width=1.0\textwidth]{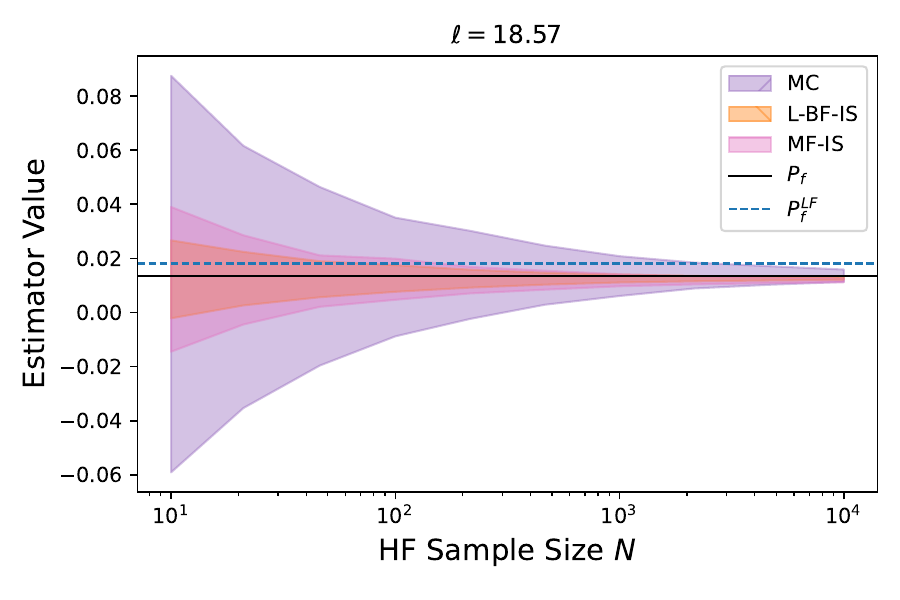}
    \subcaption{}
    \end{minipage}
    \hfill
    \begin{minipage}{0.48\textwidth}
    \includegraphics[width=1.0\textwidth]{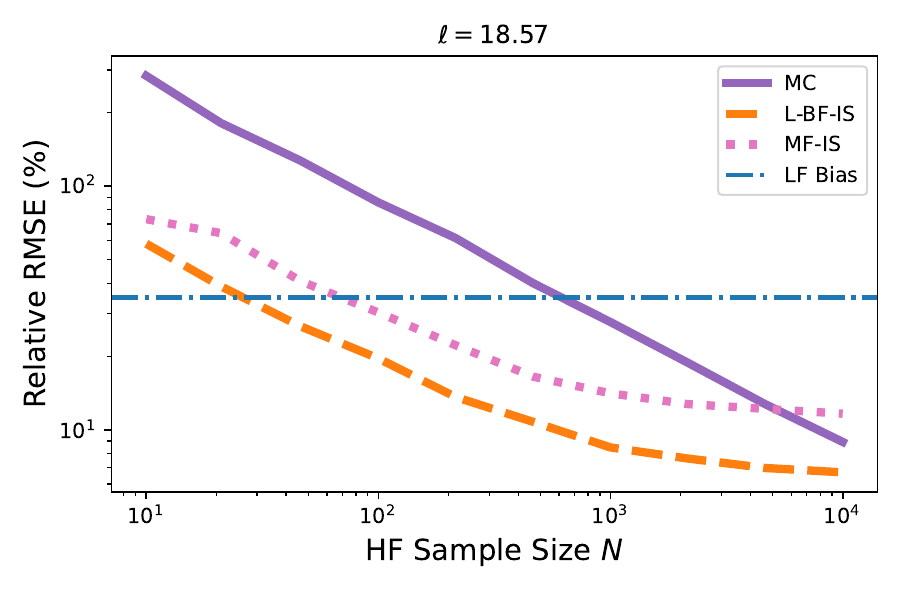}
    \subcaption{}
    \end{minipage}
    \caption{Convergence behavior of L-BF-IS (dash) for $\ell$ values of 14.90 (a-b) and 18.57 (c-d), compared with standard Monte Carlo (solid) and MF-IS (dot) using 10 Gaussian mixture clusters for the beam problem in Section~\ref{sssec:beam}. The shaded areas represent the $95\%$ confidence interval from $1,000$ trials.}
    \label{fig:beam}
\end{figure}

\subsubsection{Steady-state Heat Equation with Random Inputs}
\label{sssec:heat}

In this section, we discuss the performance of a 2D steady-state stochastic heat equation, with uncertain thermal coefficient $K$. The steady-state heat equation can be described as
\begin{equation}
\label{eqn:steady-state}
\begin{aligned}
&-\frac{\partial}{\partial \bm x}\left(K(\bm x,\bm z)\frac{\partial u(\bm x, \bm z)}{\partial \bm x}\right) = 1.0,\quad \bm x\in (0, 1)^2\\
& u(\bm x,\bm z) = 0,\quad x_1 \in\{0, 1\} \text{ or } x_2\in\{0, 1\}.
\end{aligned}
\end{equation}
The thermal coefficient $K(\bm x, \bm z)$ is defined as a stochastic process given by \cite{rahimi2007random},
\begin{equation}
\label{eq:thermal-coef}
K(\bm x, \bm z) = \bar{K} + \exp\left(\frac{\sqrt{2}}{\sqrt{D'}}\sum_{i=1}^{D'} z^w_i\cos\left(z^{a_1}_i x_1 + z^{a_2}_i x_2 + z^b_i\right)\right),
\end{equation}
where $\bar{K}=3$, $z^w_i, z^{a_1}_i, z^{a_2}_i\overset{iid}{\sim}\mathcal{N}(0, 1)$ and $z^b_i\overset{iid}{\sim}U[0,2\pi]$. The corresponding covariance kernel function of the Gaussian process for the exponent part of Equation~\eqref{eq:thermal-coef} is $k(x_1,x_2)=\exp(-(x_1-x_2)^2)$. The dimension of the problem is $4D' = D = 400$. The QoI is the solution on the domain $(0, 1)^2$ with a grid size of $\Delta x_1 = \Delta x_2 = 1.67\times 10^{-2}$ in each direction, then leading to a solution in $\R^{61\times 61}$. Therefore the QoI functions are defined as $f^\LF, f^\HF:\R^{400}\to\R^{61\times 61}$. The finite difference method is used to compute the HF QoI and a pre-trained Physics-informed Neural Operator (PINO) \cite{li2021physics} as a surrogate LF QoI function. The core idea of PINO is to construct a deep-learning-based surrogate that learns the operator $\mathcal{G}$, such that $\mathcal{G}(K)(\bm{x}, \bm{z}) \approx u(\bm{x}, \bm{z})$. We use a PINO model pre-trained following \cite{li2021physics}; however, since the distribution of $K$ in \cite{li2021physics} differs from our $K$ defined in Equation~\eqref{eq:thermal-coef}, this setting can be treated as a transfer learning problem. In this case, the training data for PINO are not counted as additional HF evaluations. The deep-learning structure of the PINO provides the Jacobian $\partial\mathcal{G}(K)(\bm x, \bm z)/\partial \bm z$ given $\bm x$ is defined over a fixed grid. In Figure~\ref{fig:darcy-examples}, three samples of $K(\bm x, \bm z)$ and the corresponding realizations of $u(\bm x, \bm z)$ are presented. 


We assume the system fails if the maximum value of $u(\bm x, \bm z)$ is larger than some thresholds. The LF and HF functions are defined as $h^\LF(\bm z) = 0.019 - \max(f^\LF)$ and $h^\HF(\bm z) = 0.022 - \max(f^\HF)$, respectively. We choose $M=1\times 10^6$ and $L=1\times 10^2$ for selecting $\ell$, with the result presented in Figure~\ref{fig:darcy-ell}. The optimal value of $\ell$, as selected by the approach two, is $1786.0$.

\begin{figure}[htp]
    \centering
    \includegraphics[width=1.0\textwidth]{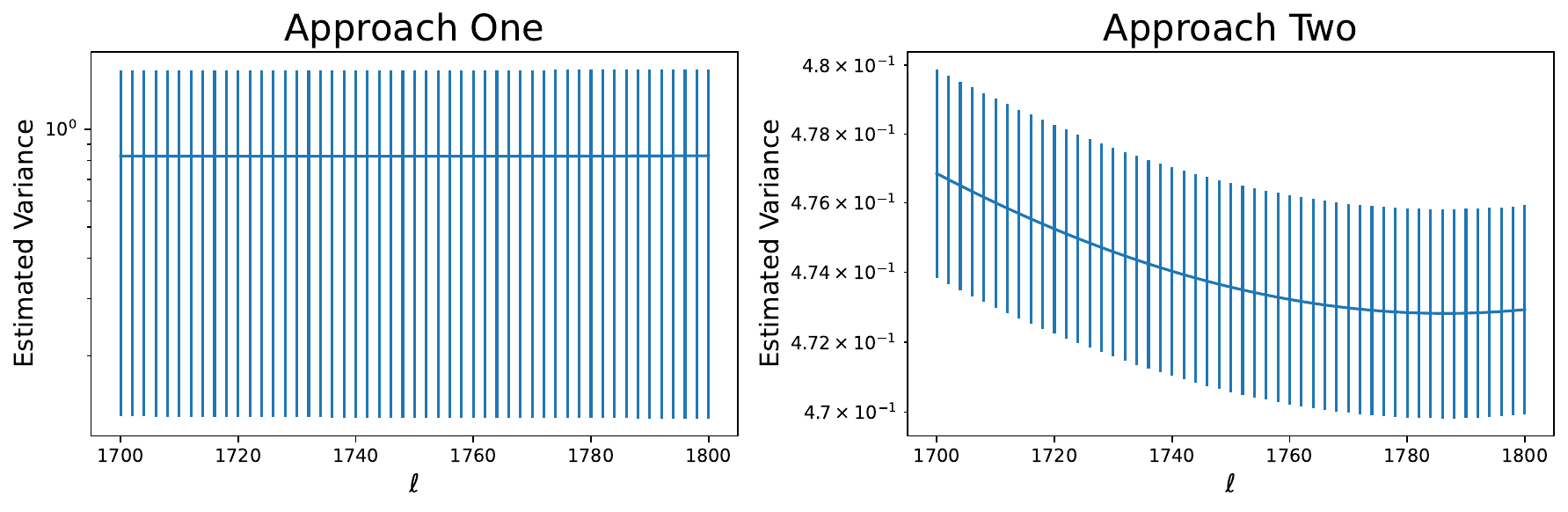}
    \caption{The estimated variance of L-BF-IS with $95\%$ confidence intervals across varying values of $\ell$ is illustrated in the left figure using approach one and in the right figure using approach two. It is worth noting that the left figure exhibits a minimum point; however, the uncertainty is sufficiently large to obscure its depiction.}
    \label{fig:darcy-ell}
\end{figure}

The Langevin algorithm is employed with step size $\tau=1\times 10^{-4}$ and burn-in number $B=1\times 10^4$. We provide three examples of the thermal coefficients that are sampled from the biasing density $q(\bm z)$ with the associated LF and HF QoIs presented in Figure~\ref{fig:darcy-q-examples}. Comparing Figure~\ref{fig:darcy-examples} and Figure~\ref{fig:darcy-q-examples}, we note that the results generated from the Langevin algorithm are more likely to produce failure results defined by functions $h(\cdot)$ and with smaller variance relative to the original reference density $p(\bm z)$. These examples explain why importance sampling using the Langevin algorithm can help reduce the variance of the estimates and eventually the MSE.


\begin{figure}[ht!]
    \centering
    \begin{minipage}[b]{0.49\textwidth}
        \centering
        \includegraphics[width=\textwidth]{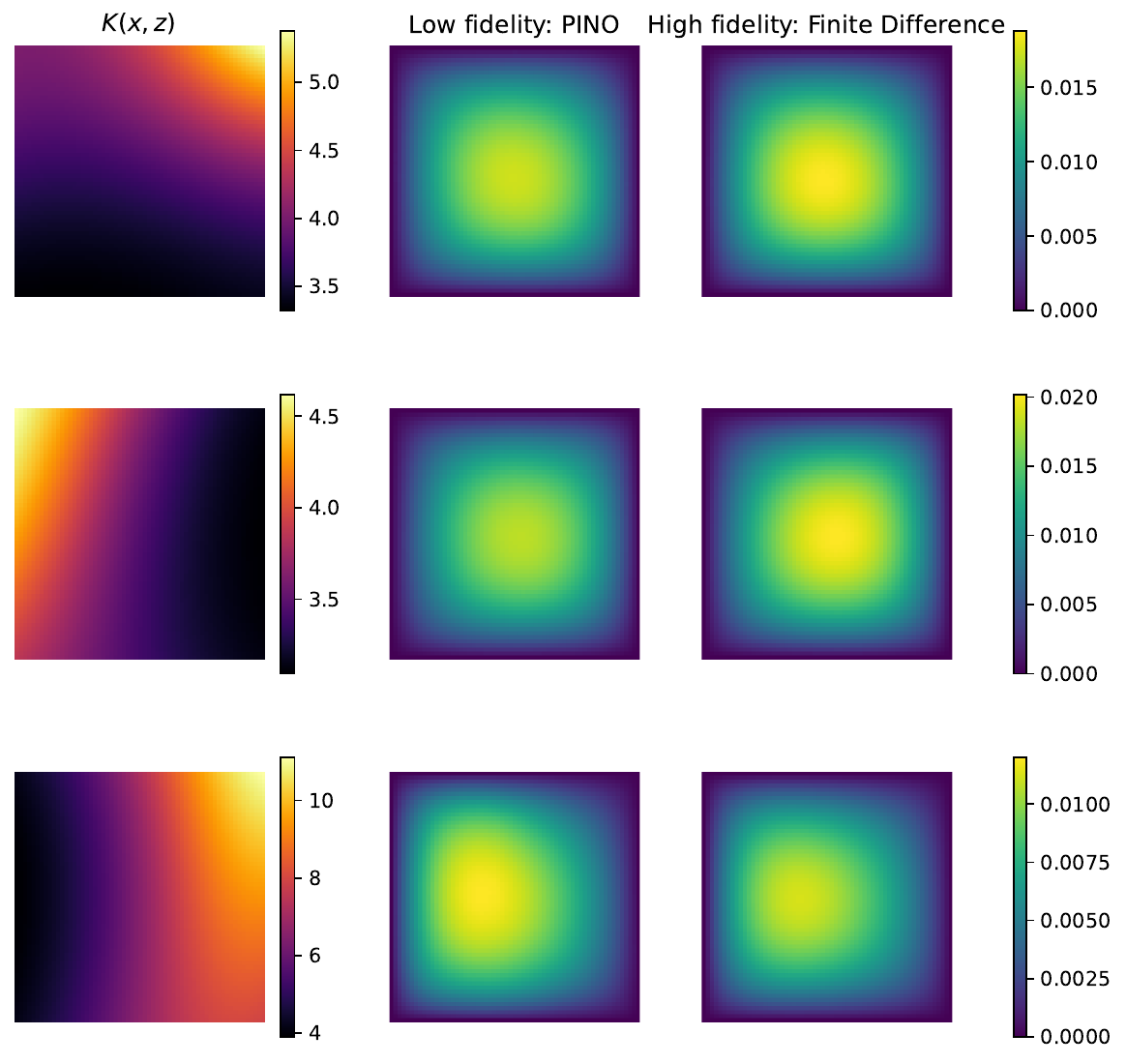}
        \subcaption{}
        \label{fig:darcy-examples}
    \end{minipage}
    \hfill
    \begin{minipage}[b]{0.49\textwidth}
        \centering
        \includegraphics[width=\textwidth]{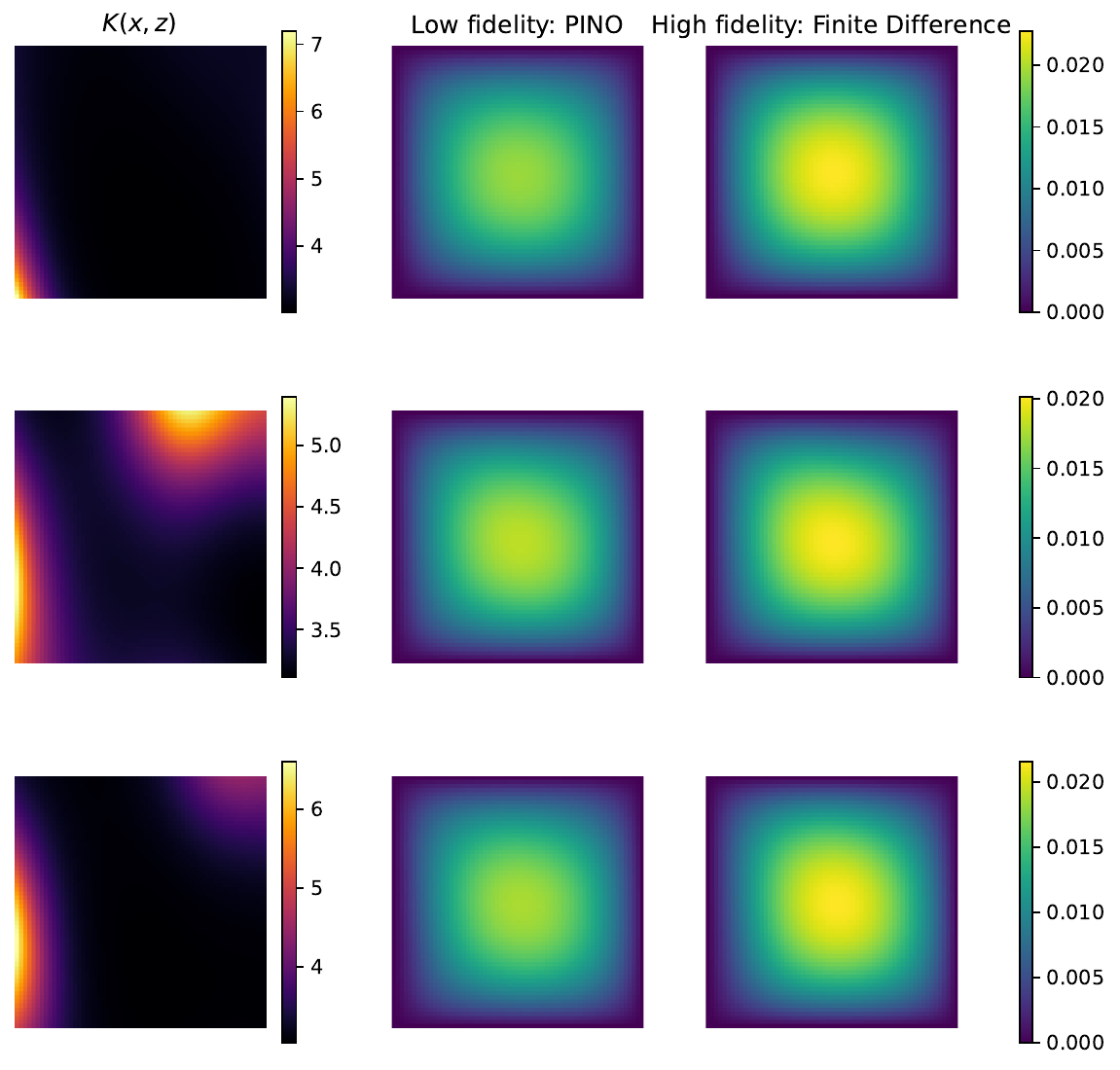}
        \subcaption{}
        \label{fig:darcy-q-examples}
    \end{minipage}
    \caption{The solutions of the steady-state heat equation in Equation~\eqref{eqn:steady-state} given three different realizations of the thermal coefficient $K(\bm x, \bm z)$ on a $61\times 61$ grid over $(0, 1)^2$ sampled from Equation~\eqref{eq:thermal-coef} (a) or $q(\bm z)$ (b). For both figures, the left column is the visualization of the thermal coefficient, the middle column is the LF QoI solution provided by a pre-trained PINO, and the right column is the HF QoI solution computed using the finite difference method.}
    \label{fig:darcy-combined}
\end{figure}

\begin{figure}[htp]
    \begin{minipage}{0.49\textwidth}
    \includegraphics[width=1.0\textwidth]{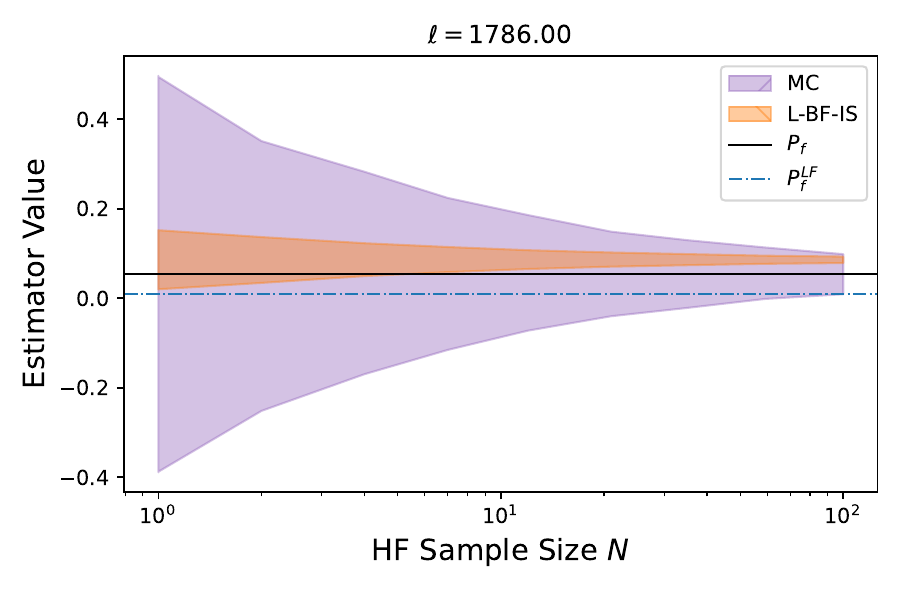}
    \subcaption{}
    \end{minipage}
    \hfill
    \begin{minipage}{0.49\textwidth}
    \includegraphics[width=1.0\textwidth]{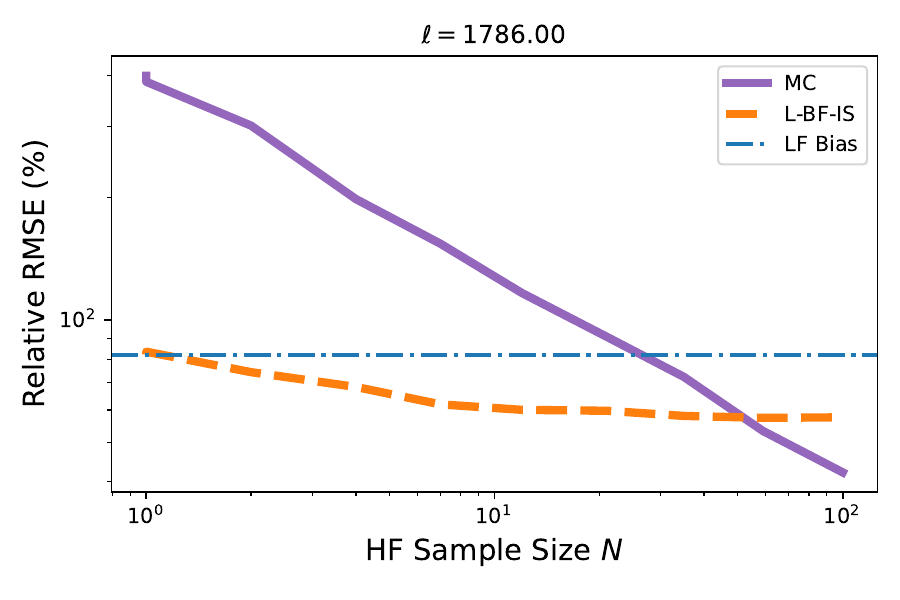}
    \subcaption{}
    \end{minipage}
    \caption{Convergence of the L-BF-IS (dashed) against standard Monte Carlo (solid) and LF failure probability (dashed dot) with $95\%$ confidence bound computed from $1,000$ trials for the steady-state heat equation problem in Section~\ref{sssec:heat}.}
    \label{fig:darcy}
\end{figure}

The convergence of the estimator is depicted in Figure~\ref{fig:darcy}. We restrict the performance comparison to cases where the number of HF evaluations is small ($\leq 63$). Notably, a non-trivial bias of approximately $2\%$ is observed. Despite demonstrating that the L-BF-IS estimator is unbiased in Section~\ref{ssec:estimator}, the samples produced by the Langevin algorithm, as described in Section~\ref{ssec:sampling}, cannot be guaranteed to represent $q(\bm z)$ exactly, thereby generating numerical bias in the observations. Given the problem's high dimensionality and suboptimal LF model, the L-BF-IS is able to reduce the relative RMSE from $85\%$ to $65\%$ using less than $N=100$ HF samples. Furthermore, the variance observed in the results generated by the L-BF-IS is substantially lower than that of standard Monte Carlo methods. This scenario underscores a case for employing pre-trained deep-learning-based operator learning strategies as an LF model. 

\section{Conclusion}
\label{sec:conclusion}
In this study, we present an importance sampling estimator, referred to as the Langevin bi-fidelity importance sampling (L-BF-IS). This estimator operates under the premise that in many practical applications, a considerably cheaper, differentiable lower-fidelity model is available. L-BF-IS employs the Metropolis-adjusted Langevin algorithm for sampling from a biasing distribution informed from the low-fidelity model, aiming to estimate failure probabilities with limited high-fidelity evaluations. The algorithm demonstrates superior performance in scenarios characterized by high input dimensions and multimodal failure regions. Two methodologies are introduced to tune a key parameter of the biasing distribution. Our empirical tests include a 1D manufactured bimodal function and two experimental setups using synthetic functions, with one involving 1000 random inputs. Additional experiments are conducted estimating failure probabilities of physics-based problems with failure probabilities of magnitude $1$-$5\%$. These experiments illustrate the efficiency of the L-BF-IS estimator relative to standard Monte Carlo simulation and a different importance sampling approach.

L-BF-IS demonstrates significant advantages, and our findings reveal opportunities to enhance the proposed estimator further. One promising direction is incorporating prior knowledge when selecting the Langevin algorithm's starting point and step size. Additionally, future research could explore treating the low-fidelity surrogate as dynamic, updating it at each iteration. This approach would extend the methodology into adaptive importance sampling, making it applicable when a fixed low-fidelity model is unavailable. These directions not only have the potential to refine the effectiveness of existing models but also pave the way for advancing state-of-the-art importance sampling techniques.

\section*{Acknowledgments} 
We would like to thank Grant Norman for the helpful discussion on the steady-state stochastic heat equation example.
N.\ Cheng and A.\ Doostan were supported by the AFOSR awards FA9550-20-1-0138 with Dr.\ Fariba Fahroo as the program manager, and by the US Department of Energy’s Wind Energy Technologies Office.




\bibliographystyle{elsarticle-num-names}
\bibliography{references}




\appendix
\section{Variance Deviation}
\label{apdx:variance}
The simplification of the $\widehat{P}^\BF_N$ variance is as follow:
\begin{equation}
\begin{aligned}
&\Var_{p\otimes q}\left[\widehat{P}^\BF_{M,N}\right]
\approx \frac{\mathcal{Z}^2(\ell)}{N}\Var_q\left[\mathbbm{1}_{h^\HF(\bm z)<0}\exp\left(\ell\tanh\circ h^\LF(\bm z) \right)\right]\\
&\quad= \frac{\mathcal{Z}^2(\ell)}{N}\left(\E_q\left[\mathbbm{1}_{h^\HF(\bm z)<0}\exp\left(2\ell\tanh\circ h^\LF(\bm z) \right)\right] - \left(\E_q\left[\mathbbm{1}_{h^\HF(\bm z)<0}\exp\left(\ell\tanh\circ h^\LF(\bm z) \right)\right]\right)^2\right)\\
&\quad= \frac{\mathcal{Z}^2(\ell)}{N}\left(\E_q\left[\mathbbm{1}_{h^\HF(\bm z)<0}\exp\left(2\ell\tanh\circ h^\LF(\bm z) \right)\right] - \left(\frac{1}{\mathcal{Z}(\ell)}\E_p\left[\mathbbm{1}_{h^\HF(\bm z)<0}\right]\right)^2\right)\\
&\quad= \frac{\mathcal{Z}(\ell)}{N} \E_p\left[\mathbbm{1}_{h^\HF(\bm z)<0}\exp\left(\ell\tanh\circ h^\LF(\bm z) \right)\right] - \frac{1}{N}\left(\E_p\left[\mathbbm{1}_{h^\HF(\bm z)<0}\right]\right)^2\\
&\quad= \frac{\mathcal{Z}(\ell)}{N} \E_p\left[\mathbbm{1}_{h^\HF(\bm z)<0}\exp\left(\ell\tanh\circ h^\LF(\bm z) \right)\right] - \frac{(\ExH)^2}{N}.
\end{aligned}
\end{equation}

\section{An Upper Bound for the Normalization Constant}
\begin{lemma}
\label{lm:cont-bound}
An upper bound for $\mathcal{Z}(\ell)$ is as
\begin{equation}
\mathcal{Z}(\ell) < (e^\ell - 1)\P_p[\AL] + 1.
\end{equation}
\end{lemma}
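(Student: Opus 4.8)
The plan is to bound the defining integral of $\mathcal{Z}(\ell)$ by splitting the domain $\Omega$ according to whether the low-fidelity model indicates failure, i.e., along the partition $\Omega = \AL \cup \AL^C$. This mirrors the decomposition already used in Equation~\eqref{eqn:var-bound} for the expectation term.

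First I would write
\begin{equation}
\mathcal{Z}(\ell) = \int_{\AL} \exp\left(-\ell\tanh\circ h^\LF(\bm z)\right) p(\bm z) d\bm z + \int_{\AL^C} \exp\left(-\ell\tanh\circ h^\LF(\bm z)\right) p(\bm z) d\bm z.
\end{equation}
On $\AL$ we have $h^\LF(\bm z) < 0$, hence $\tanh\circ h^\LF(\bm z) \in (-1, 0)$, so $-\ell\tanh\circ h^\LF(\bm z) \in (0, \ell)$ and the integrand is bounded above by $e^\ell p(\bm z)$; integrating gives the bound $e^\ell \P_p[\AL]$ for the first term. On $\AL^C$ we have $h^\LF(\bm z) \ge 0$, hence $\tanh\circ h^\LF(\bm z) \in [0, 1)$, so $-\ell\tanh\circ h^\LF(\bm z) \in (-\ell, 0]$ and the integrand is bounded above by $p(\bm z)$; integrating gives the bound $\P_p[\AL^C] = 1 - \P_p[\AL]$ for the second term. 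Adding the two bounds yields
\begin{equation}
\mathcal{Z}(\ell) < e^\ell \P_p[\AL] + 1 - \P_p[\AL] = (e^\ell - 1)\P_p[\AL] + 1,
\end{equation}
which is the claimed inequality. The strictness of the inequality comes from the fact that $\tanh$ never actually attains the values $\pm 1$, so the pointwise bounds on the integrand are strict wherever $p$ has positive mass (assuming $\Omega$ is not $p$-null, which it is not since $p$ is a density).

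There is essentially no obstacle here; the only thing to be slightly careful about is the boundary case $h^\LF(\bm z) = 0$ (the limit state set), where $\tanh\circ h^\LF(\bm z) = 0$ and the integrand equals $p(\bm z)$ exactly — this is harmless since it is consistent with the bound $\le p(\bm z)$ used on $\AL^C$, and in any case this set may have $p$-measure zero or not without affecting the argument. One could optionally note that the same reasoning symmetrically gives the lower bound $\mathcal{Z}(\ell) > (e^{-\ell}-1)\P_p[\AL^C] + e^{-\ell}$, but that is not needed for the stated lemma.
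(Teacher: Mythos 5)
Your proof is correct and follows essentially the same route as the paper: the identical decomposition of $\mathcal{Z}(\ell)$ over $\AL$ and $\AL^C$, with the pointwise bounds $e^\ell$ and $1$ on the two pieces. Your extra remarks on the strictness of the inequality and the boundary set $\{h^\LF = 0\}$ are a slight refinement over the paper's proof but do not change the argument.
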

\begin{proof}
\begin{equation}
\begin{aligned}
\mathcal{Z}(\ell) &= \E_p[\exp\left(-\ell\tanh\circ h^\LF(\bm z)\right)]\\
&= \int_\Omega \exp\left(-\ell\tanh\circ h^\LF(\bm z)\right)p(\bm z)d\bm z\\
&= \int_{\AL} \exp\left(-\ell\tanh\circ h^\LF(\bm z)\right)p(\bm z)d\bm z + \int_{\AL^C} \exp\left(-\ell\tanh\circ h^\LF(\bm z)\right)p(\bm z)d\bm z\\
&< e^\ell\P_p[\AL] + \P_p[\AL^C]\\
&= (e^\ell - 1)\P_p[\AL] + 1.
\end{aligned}
\end{equation}
\end{proof}

\section{Simplification for KL Divergence}
\label{apdx:simp-kl}
The detailed process to simplify the KL divergence is,
\begin{equation}
\begin{aligned}
\KL(q^\ast\|q) &= \E_{q^\ast}\left[\log\frac{\mathcal{Z}(\ell)\mathbbm{1}_{h^\HF(\bm z) < 0}}{\ExH\exp\left(-\ell\tanh\circ h^\LF(\bm z)\right)}\right]\\
&= \log\frac{\mathcal{Z}(\ell)}{\ExH} + \E_{q^\ast}[\log \mathbbm{1}_{h^\HF(\bm z) < 0} + \ell\tanh\circ h^\LF(\bm z)]\\
&= \log\frac{\mathcal{Z}(\ell)}{\ExH} + \int_{\AH}\log \mathbbm{1}_{h^\HF(\bm z) < 0} + \ell\tanh\circ h^\LF(\bm z)d\bm z\\
&= \log\frac{\mathcal{Z}(\ell)}{\ExH} + \ell\int_{\AH\cap \AL}\tanh\circ h^\LF(\bm z)d\bm z + \ell\int_{\AH\cap \AL^C}\tanh\circ h^\LF(\bm z)d\bm z\\
&< \log\frac{(e^\ell - 1)\P_p[\AL] + 1}{\ExH} + \ell\P_p[\AH\cap \AL^C].
\end{aligned}
\end{equation}

\end{document}